\setlist[enumerate,1]{label={(\roman*)}}
\Crefname{property}{Property}{Properties}
\theoremstyle{plain}
\newtheorem{theorem}{Theorem}[section]
\newtheorem{lemma}[theorem]{Lemma}
\newtheorem{proposition}[theorem]{Proposition}
\newtheorem{corollary}[theorem]{Corollary}
\newtheorem{remark}[theorem]{Remark}
\newtheorem{example}[theorem]{Example}
\theoremstyle{definition}
\newtheorem{definition}[theorem]{Definition}
\newtheorem{problem}[theorem]{Problem}
\newcommand{\naturals}{\mathbb{N}}
\newcommand{\positiveintegers}{\mathbb{N}_{>0}}
\newcommand{\nonnegativereals}{\mathbb{R}_{\ge 0}}
\newcommand{\distributions}[1][]{\mathcal{P}_{#1}}
\DeclareMathOperator{\probability}{Pr}
\newcommand{\entropy}{H}
\newcommand{\relativeentropy}[3][]{\mathop{D_{#1}}\mathopen{}\left(#2\middle\|#3\right)\mathclose{}}
\newcommand{\mutualinformation}{I}
\newcommand{\typeclass}[2]{T^{#1}_{#2}}
\newcommand{\graphs}{\mathcal{G}}
\newcommand{\typegraph}[3]{#1^{\strongproduct #2}[\typeclass{#2}{#3}]}
\renewcommand{\complement}[1]{\overline{#1}}
\newcommand{\disjointunion}{\sqcup}
\newcommand{\graphjoin}{+}
\newcommand{\strongproduct}{\boxtimes}
\newcommand{\costrongproduct}{*}
\newcommand{\adjacentorequal}{\simeq}
\DeclareMathOperator{\support}{supp}
\newcommand{\norm}[2][]{\left\|#2\right\|_{#1}}
\newcommand{\ball}[2]{B_{#1}(#2)}
\newcommand{\closedball}[2]{\overline{B}_{#1}(#2)}
\newcommand{\setbuild}[2]{\left\{#1\middle|#2\right\}}
\newcommand{\entropyrate}{\overline{\entropy}}
\newcommand{\Frate}{\overline{F}}
\newcommand{\Ornsteindistance}{\bar{d}}
\newcommand{\Wassersteindistance}{W_1}
\newcommand{\secondorderdistributions}[1][]{\mathcal{P}^{(2)}_{#1}}
\newcommand{\secondordertypeclass}[2]{T^{#1,(2)}_{#2}}
\title{Asymptotic equipartition property for a Markov source having ambiguous alphabet}
\author[1]{Tam\'as Tasn\'adi}
\author[2,3]{P\'eter Vrana}
\affil[1]{Department of Analysis, Institute of Mathematics, Budapest University of Technology and Economics, M\H uegyetem rkp. 3., H-1111 Budapest, Hungary.}
\affil[2]{Department of Geometry, Institute of Mathematics, Budapest University of Technology and Economics, M\H uegyetem rkp. 3., H-1111 Budapest, Hungary.}
\affil[3]{MTA-BME Lend\"ulet Quantum Information Theory Research Group, M\H uegyetem rkp. 3., H-1111 Budapest, Hungary}
\begin{document}
\maketitle

\begin{abstract}
We propose a generalization of the asymptotic equipartition property to discrete sources with an ambiguous alphabet, and prove that it holds for irreducible stationary Markov sources with an arbitrary distinguishability relation. Our definition is based on the limiting behavior of graph parameters appearing in a recent dual characterization of the Shannon capacity, evaluated at subgraphs of strong powers of the confusability graph induced on high-probability subsets. As a special case, our results give an information-theoretic interpretation of the graph entropy rate of such sources.
\end{abstract}

\section{Introduction}

A discrete stationary source over the alphabet $\mathcal{X}$ with entropy rate $\entropyrate$ satisfies the asymptotic equipartition property if for every large $n$ there is a subset of $\mathcal{X}^n$ of size $2^{n\entropyrate+o(n)}$, total probabilty arbitrarily close to $1$ and such that every sequence within the subset has probability $2^{-n\entropyrate+o(n)}$ (typical set). Forming the basis of many results in information theory, the asymptotic equipartition property for a sequence of independent and identically distributed random variables was proved by Shannon \cite{shannon1948mathematical}, subsequently extended to ergodic sources in the Shannon--McMillan--Breiman theorem \cite{mcmillan1953basic,breiman1957individual}. The most direct application of the asymptotic equipartition property is to source coding, as it implies that the simple coding scheme that assigns different codewords to the typical set and an arbitrary sequence to the remaining ones, requires approximately $\entropyrate$ bits per letter, and is asymptotically optimal.

In \cite{korner1973coding} K\"orner studied a coding problem for a discrete memoryless source that produces symbols which cannot be fully distinguished from each other. In this model, distinguishability is a symmetric relation on the alphabet, and two strings of equal length are distinguishable if they are distinguishable in at least one coordinate. The task is then to assign bit strings to the symbols emitted by the source in such a way that, with high probability, symbols that can be distinguished are mapped to different strings. If we encode the relation in a graph $G$ whose vertex set is the alphabet, and two symbols form an edge if they can be confused, then the confusability graph of $n$-letter strings will be $G^{\strongproduct n}$, the $n$-fold strong product of $G$ with itself. If the common distribution of the letters is $P$, then the number of bits required grows as $n\entropy(\overline{G},P)+o(n)$, where $\entropy(\complement{G},P)$ is the graph entropy of $\complement{G}$. The graph entropy of the complement can be expressed either as the entropy function of a certain convex corner \cite[3.1 Lemma]{csiszar1990entropy} (the vertex packing polytope), or as the limit of the (fractional) clique cover numbers of high-probability subsets of $G^{\strongproduct n}$ (in the complementary picture: chromatic numbers of such subsets of the conormal power of $G$) \cite{korner1973coding}:
\begin{equation}\label{eq:graphentropy}
\entropy(\complement{G},P)=\lim_{n\to\infty}\min_{\substack{S\subseteq V(G)^n  \\  P^{\otimes n}(S)\ge c}}\frac{1}{n}\log\overline{\chi}_f(G^{\strongproduct n}[S]),
\end{equation}
where $c\in(0,1)$ is arbitrary. For more on the properties and generalizations of the graph entropy, we refer to the surveys \cite{simonyi1995graph,simonyi2001perfect}.

In this coding problem, the (fractional) clique cover number serves as a way to measure the effective size of a set of partially distinguishable symbols, in the sense that at least $\log\overline{\chi}_f(G)$ bits are required to be able to map distinguishable symbols to distinct bit strings. More generally, given two alphabets with confusability graphs $G$ and $H$, we may consider encodings $\varphi:V(G)\to V(H)$ that map distinguishable pairs to distinguishable ones, i.e. require that $\varphi$ be a homomorphism $\complement{G}\to\complement{H}$. If such an encoding exists, then $\overline{\chi}_f(G)\le\overline{\chi}_f(H)$. Moreover, since $\overline{\chi}_f$ is multiplicative under the strong product (unlike $\overline{\chi}$), this inequality remains true even if we only assume the existence of a homomorphism $\complement{G^{\strongproduct n}}\to\complement{H^{\strongproduct n}}$, i.e. a block-encoding. However, the inequality between the fractional clique cover numbers is not sufficient for the existence of block-encodings in general. In a recent work Zuiddam introduced the asymptotic spectrum of graphs \cite{zuiddam2019asymptotic}, the set of nonnegative real graph parameters that are multiplicative under the strong product, additive under the disjoint union, and monotone in the sense that $f(G)\le f(H)$ if there is a homomorphism from $\complement{G}$ to $\complement{H}$, and showed that if the inequality $f(G)\le f(H)$ holds simultaneously for all such functions, then there exist homomorphisms $\complement{G^{\strongproduct n+o(n)}}\to\complement{H^{\strongproduct n}}$. Elements of the asymptotic spectrum of graphs, or spectral points, include the fractional clique cover number, the Lov\'asz number \cite{lovasz1979shannon}, the complement of the projective rank $\complement{\xi}_f(G)=\xi_f(\complement{G})$ \cite{cubitt2014bounds}, and the fractional Haemers bound $\mathcal{H}^\mathbb{F}_f(G)$ over any field $\mathbb{F}$ \cite{haemers1979some,blasiak2013graph,bukh2018fractional}. One may regard these as different ways to measure the effective size of the alphabet, which collectively characterize the possible encodings of large blocks.

From this point of view, \eqref{eq:graphentropy} is a generalization of the asymptotic equipartition property, and for an edgeless graph, it is indeed equivalent to it. Therefore we make the following definition (see \cref{def:generalAEP} below). A source with an ambiguous alphabet, characterized by a finite graph $G$ and a distibution $\mu$ on $V(G)^{\infty}$ satisfies the asymptotic equipartition property if for every element $f$ of the asymptotic spectrum of graphs, the limit
\begin{equation}
\lim_{n\to\infty}\min_{\substack{S\subseteq V(G)^n  \\  \mu_{12\ldots n}(S)\ge c}}\frac{1}{n}\log f(G^{\strongproduct n}[S])
\end{equation}
exists and is independent of $c$ for $c\in(0,1)$. In \cite{vrana2021probabilistic} it was shown that when $\mu$ is the joint distribution of a sequence of independent random variables with common distribution $P$, then this property holds. The value of the limit is $F(G,P)$, where $F$ is the (logarithmic) probabilistic refinement of $f$. For certain spectral points, such as the Lov\'asz number and the fractional clique cover number, the probabilistic refinement has a single-letter characterization.

In this work, we extend the asymptotic equipartition property in the above sense to ergodic Markov processes with an ambiguous alphabet. We characterize the limit in terms of the logarithmic probabilistic refinement evaluated at large blocks, in analogy with the entropy rate. In \cref{sec:ambiguous} we review the concept of a partially distinguishable alphabet and collect the relevant graph-theoretical notions, including the asymptotic spectrum of graphs and the properties of the probabilistic refinements of spectral points. In \cref{sec:AEP} we reformulate the asymptotic equipartition property for an arbitrary source in a form that motivates our generalization for ambiguous alphabets. In \cref{sec:Frate} we define and study a quantity analogous to the entropy rate, associated with the probabilistic refinement of a spectral point and a source with a partially distinguishable alphabet. Like the entropy rate, the new quantity exists for stationary sources, and is continuous with respect to the Ornstein distance. In \cref{sec:Markov} we prove our main result, that irreducible stationary Markov chains with an arbitrary distinguishability relation have the asymptotic equipartition property. In \cref{sec:hidden} we study processes that arise from observing a function of the output of a (hidden) process, and relate the asymptotic equipartition property for the observed source to that of the hidden process equipped with an induced confusability relation.

\section{Ambiguous alphabets}\label{sec:ambiguous}
We consider finite alphabets with a fixed symmetric relation with the interpretation that pairs of letters may or may not be distinguishable. We adopt the convention that the relation is given by the confusability graph: the vertices of this graph are the letters in the alphabet, and there is an edge between two distinct letters if they are confusable, and no edge if they are distinguishable. Note that the opposite convention is also used in the literature -- to translate the results, one needs to replace graphs with their complements, which interchanges the disjoint union ($\disjointunion$) with join ($\graphjoin$), the strong product ($\strongproduct$, also known as strong direct product, AND product, normal product) with the costrong product ($\costrongproduct$, disjunctive product, OR product, co-normal product), etc. We will use the notation $g\adjacentorequal h$ to mean that $g$ and $h$ are either adjacent or equal, e.g. the strong product is defined as $(g_1,g_2)\adjacentorequal(h_1,h_2)$ iff $g_1\adjacentorequal h_1$ and $g_2\adjacentorequal h_1$.

A confusability relation on an alphabet $\mathcal{X}$ induces a similar relation on the set $\mathcal{X}^n$ of $n$-letter strings: two strings are distinguishable iff there is at least one coordinate where they are distinguishable. In terms of the confusability graph $G=(\mathcal{X},E)$, this relation is given by the power $G^{\strongproduct n}$ with respect to the strong product. We note that a different, more general approach would be to start with a confusability relation on the set of infinite strings, and define finite strings to be confusable if they can be extended to confusable infinite strings (see \cite[Section 5.3]{boreland2020information}). Assuming that the relation is shift-invariant, this leads to a sequence of confusability graphs $G_n$ with vertex set $V(G_n)=\mathcal{X}^n$ subject to the compatiblity condition $G_m\strongproduct G_n\supseteq G_{n+m}$. We do not explore this setting here, but remark that some of the results below remain true when $G^{\strongproduct n}$ is replaced with $G_n$.

As a general rule, any coding problem where one is looking for an encoding, i.e. function $\varphi:\mathcal{X}\to\mathcal{Y}$ subject to certain constraints may be generalized by equipping the input and output alphabets with a confusability relation and requiring the functions to map distinguishable pairs into distinguishable ones, i.e. that $\varphi$ is a homomorphism between the complements of the graphs. In the case of block encodings we require that $\varphi:\complement{H^{\strongproduct n}}\to\complement{G^{\strongproduct Rn+o(n)}}$ is a homomorphism. We will write $H\le G$ if there exists a homomorphism $\complement{H}\to\complement{G}$.

In the special case when $H$ has no edge, the requirement is that distinct inputs are encoded as distinguishable outputs, i.e. we are essentially looking for large independent sets in $G^{\strongproduct n}$. This is the problem of zero-error communication through a discrete memoryless channel \cite{shannon1956zero}. In the opposite situation, where $H$ is arbitrary and $G$ is edgeless, a homomorphism $\varphi:\complement{H^{\strongproduct n}}\to\complement{G^{\strongproduct Rn+o(n)}}$ is essentially a covering of $H^{\strongproduct n}$ by $\lvert G\rvert^{Rn+o(n)}$ cliques, which is possible precisely when the fractional clique cover number of $H$ is at most $\lvert G\rvert^R$. In \cite{korner1973coding} K\"orner considered a stochastic version of this problem, where the input follows a (memoryless) distribution $P^{\otimes n}$ for some $P\in\distributions(V(H))$, and one only needs to encode typical strings. In this case the rate is characterized by the graph entropy $\entropy(\complement{H},P)$.

The asymptotic spectrum of graphs $\Delta(\graphs)$ is the set of $\nonnegativereals$-valued graph parameters $f$ satisfying the following properties \cite{zuiddam2019asymptotic}:
\begin{enumerate}
\item $f(G\disjointunion H)=f(G)+f(H)$;
\item $f(G\strongproduct H)=f(G)f(H)$;
\item if $H\le G$, then $f(H)\le f(G)$;
\item $f(\complement{K_d})=d$.
\end{enumerate}
Elements of $\Delta(\graphs)$ are called spectral points. These properties imply that whenever a block encoding $\varphi:\complement{H^{\strongproduct n}}\to\complement{G^{\strongproduct Rn+o(n)}}$ exists for infinitely many block lengths $n$, then $f(H)\le f(G)^R$ holds for every spectral point $f\in\Delta(\graphs)$. Remarkably, the converse is also true in the sense that if $f(H)\le f(G)$ holds for all spectral points $f$, then there exist block encodings $\complement{H^{\strongproduct n}}\to\complement{G^{\strongproduct Rn+o(n)}}$ for all $n$ \cite[Theorem 1.1.]{zuiddam2019asymptotic}.

As mentioned in the introduction, a useful way to think about the spectral points is that they generalize, in various ways, the cardinality of the alphabet. With a stochastic model of a source in mind, one may consider an ambiguous alphabet together with a probability distribution on the letters, which gives rise to an i.i.d. source. In analogy with the coding problem leading to the graph entropy, one can then ask how the effective size of the typical sets grow as the block length increases. This is captured by the probabilistic refinement of the spectral points \cite{vrana2021probabilistic}. A possible definition of these is the following. Let $G$ be a graph, $P\in\distributions(V(G))$ and $P^{(n)}$ a sequence of $n$-types converging to $P$. Then it can be shown that
\begin{equation}
F(G,P):=\lim_{n\to\infty}\frac{1}{n}\log f(\typegraph{G}{n}{P^{(n)}})
\end{equation}
is well-defined, i.e. the limit exists and depends only on $G$ and the limiting distribution $P$. Moreover, the spectral point $f$ may be recovered as $\log f(G)=\max_{P\in\distributions(V(G))}F(G,P)$ and the probabilistic refinements of spectral points are characterized by the following properties \cite[Theorems 1.1 and 1.2]{vrana2021probabilistic}:
\begin{enumerate}
\item $P\mapsto F(G,P)$ is concave;
\item if $P\in\distributions(V(G)\times V(H))$ with marginals $P_G$ and $P_H$, then
\begin{equation}\label{eq:subadditivity}
F(G\strongproduct H,P)\le F(G,P_G)+F(H,P_H)\le F(G\strongproduct H,P)+\mutualinformation(G:H)_P,
\end{equation}
where $\mutualinformation(G:H)_P=\entropy(P_G)+\entropy(P_H)-\entropy(P)$;
\item if $P_G\in\distributions(V(G))$, $P_H\in\distributions(V(H))$ and $p\in[0,1]$, then
\begin{equation}
F(G\disjointunion H,pP_G\oplus(1-p)P_H)=pF(G,P_G)+(1-p)F(H,P_H)+h(p),
\end{equation}
where $h(p)=-p\log p-(1-p)\log(1-p)$;
\item\label[property]{it:Fmonotone} if $\varphi:\complement{H}\to\complement{G}$ is a homomorphism and $P\in\distributions(V(H))$, then $F(H,P)\le F(G,\varphi_*(P))$.
\end{enumerate}

\begin{example}\leavevmode
\begin{enumerate}
\item The fractional clique cover number is an element of $\Delta(\graphs)$. Its probabilistic refinement is the graph entropy of the complement.
\item The probabilistic refinement of the Lov\'asz number is equal to the functional studied by Marton in \cite{marton1993shannon}.
\end{enumerate}
\end{example}

It is not a coincidence that these properties of the probabilistic refinements (and some others, such as the continuity estimate \cite[Proposition 3.3 (iv)]{vrana2021probabilistic}) resemble those of the Shannon entropy, and indeed these functionals arise as special cases of an entropy concept introduced in \cite{csiszar1990entropy}, the entropy function of a convex corner. One of the aims of the present paper is to strengthen the entropic interpretation of the functionals arising as probabilistic refinements of spectral points, and in particular to explore their properties in connection with sequences of dependent random variables over ambiguous alphabets. As an example, let us see how arguably the simplest such property of the Shannon entropy generalizes: that the joint entropy of a pair of random variables is not less than any of the individual entropies or, in other words, that the conditional entropy is nonnegative.
\begin{proposition}\label{prop:conditionalFnonnegative}
Let $G,H$ be graphs and $P\in\distributions(V(G)\times V(H))$. Then $F(H,P_H)\le F(G\strongproduct H,P)$.
\end{proposition}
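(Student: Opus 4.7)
The plan is to bound the two sides at every finite $n$ by exhibiting a graph homomorphism between the relevant complemented type-class subgraphs, then invoke the monotonicity of $f\in\Delta(\graphs)$ and pass to the limit. Concretely, I fix a sequence of joint $n$-types $P^{(n)}$ converging to $P$ and let $P_H^{(n)}$ be their $V(H)$-marginals, which converge to $P_H$; since the limit defining $F$ is independent of the chosen type sequence, the quantities $F(H,P_H)$ and $F(G\strongproduct H,P)$ may be computed along $P_H^{(n)}$ on one side and $P^{(n)}$ on the other. It then suffices to construct, for each $n$, a homomorphism
\[
\varphi_n:\complement{\typegraph{H}{n}{P_H^{(n)}}}\longrightarrow\complement{(G\strongproduct H)^{\strongproduct n}[\typeclass{n}{P^{(n)}}]},
\]
for then $f(\typegraph{H}{n}{P_H^{(n)}})\le f((G\strongproduct H)^{\strongproduct n}[\typeclass{n}{P^{(n)}}])$ by axiom (iii) of $\Delta(\graphs)$, and dividing by $n$ and letting $n\to\infty$ finishes the proof.

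To build $\varphi_n$, for each $h^n\in\typeclass{n}{P_H^{(n)}}$ I distribute $G$-symbols across the coordinates as prescribed by the joint type: among the $nP_H^{(n)}(h)$ positions where $h^n$ equals a given $h\in V(H)$, assign the symbol $g\in V(G)$ to exactly $nP^{(n)}(g,h)$ of them. Since $\sum_gP^{(n)}(g,h)=P_H^{(n)}(h)$ this is consistent, and the resulting $g^n\in V(G)^n$ satisfies $((g_1,h_1),\ldots,(g_n,h_n))\in\typeclass{n}{P^{(n)}}$; I declare this to be $\varphi_n(h^n)$. The key observation is that $\varphi_n$ preserves the $V(H)$-component in every coordinate: if $h^n,\tilde h^n$ form an edge of the domain, i.e.\ they are distinct with some coordinate $i$ satisfying $h_i\ne\tilde h_i$ and $h_i\not\adjacent\tilde h_i$ in $H$, then at the same coordinate $(g_i,h_i)\ne(\tilde g_i,\tilde h_i)$ and $(g_i,h_i)\not\adjacentorequal(\tilde g_i,\tilde h_i)$ in $G\strongproduct H$ regardless of the $G$-components, so $\varphi_n(h^n)$ and $\varphi_n(\tilde h^n)$ form an edge of the codomain.

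The main conceptual subtlety---really the only one---is that the naive constant embedding $h^n\mapsto(g_0^n,h^n)$ for a fixed $g_0\in V(G)$ is a bona fide homomorphism between the complements of the full strong powers, but it lands in the type class of $\delta_{g_0}\otimes P_H$ rather than of $P$, and so cannot be used to compare $F(H,P_H)$ with $F(G\strongproduct H,P)$ directly. The fix, encoded above, is to let the $G$-component vary with $h^n$ so as to reproduce the prescribed joint type, while exploiting the fact that non-adjacency in the $H$-coordinate alone already certifies non-adjacency in $G\strongproduct H$. Everything else is a direct application of the definition of the probabilistic refinement.
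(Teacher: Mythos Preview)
Your proof is correct. It is, however, not the paper's main argument but precisely the alternative ``low-level'' proof that the paper sketches in the Remark immediately following the Proposition: fix a joint $n$-type $P^{(n)}$, and exhibit the cohomomorphism $\typegraph{H}{n}{P_H^{(n)}}\le\typegraph{(G\strongproduct H)}{n}{P^{(n)}}$ by sending each $h^n$ to $(g^n,h^n)$ for some $g^n$ that reproduces the prescribed joint type. The paper's main proof stays at the single-letter level: for any function $\psi:V(H)\to V(G)$ the map $h\mapsto(\psi(h),h)$ is a homomorphism $\complement{H}\to\complement{G\strongproduct H}$, so \cref{it:Fmonotone} gives $F(H,P_H)\le F(G\strongproduct H,\varphi^{(\psi)}_*(P_H))$; then one observes that $P$ lies in the convex hull of the pushforwards $\varphi^{(\psi)}_*(P_H)$ as $\psi$ ranges over all functions, and concludes by concavity of $F$. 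Your route is more hands-on and avoids concavity, appealing only to the definition of the probabilistic refinement and monotonicity of $f$; the paper's route is shorter and uses the axiomatic properties of $F$ as a black box. Both are valid and the paper acknowledges both.
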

\begin{proof}
If $\psi:V(H)\to V(G)$ is any function, then $\varphi^{(\psi)}(h):=(\psi(h),h)$ is a homomorphism $\complement{H}\to\complement{G\strongproduct H}$, therefore $F(H,P_H)\le F(G\strongproduct H,\varphi^{(\psi)}_*(P_H))$. $P$ is in the convex hull of $\setbuild{\varphi^{(\psi)}_*(P_H)}{\psi:V(H)\to V(G)}$ and $F$ is convace, therefore $F(H,P_H)\le F(G\strongproduct H,P)$.
\end{proof}
\begin{remark}
An alternative, ``low-level'' proof proceeds by first considering joint $n$-types $P\in\distributions[n](V(G)\times V(H))$ and noting that $\typegraph{H}{n}{P_H}\le\typegraph{(G\strongproduct H)}{n}{P}$, which can be seen by mapping each $y\in\typeclass{n}{P_H}$ to $(x,y)$ for any $x$ such that the joint type of $(x,y)$ is $P$.
\end{remark}

One may reasonably conjecture that the strong subadditivity (or submodularity) property also holds, but a proof has so far eluded us.
\begin{problem}\label{prob:strongsubadditivity}
Is it true that for every triple of graphs $G,H,K$ and $P\in\distributions(V(G)\times V(H)\times V(K))$ the inequality
\begin{equation}
F(G\strongproduct H,P_{GH})+F(H\strongproduct K,P_{HK})\ge F(H,P_{H})+F(G\strongproduct H\strongproduct K,P)
\end{equation}
holds?
\end{problem}

\section{Asymptotic equipartition property}\label{sec:AEP}

In the following, $\mathcal{X}$ is a finite set (alphabet) equipped with the $\sigma$-algebra $2^\mathcal{X}$, and $\mathcal{X}^\infty$ denotes the set of infinite words $x_1x_2\ldots$ over the alphabet $\mathcal{X}$ with the product $\sigma$-algebra. Let $\mu$ be a probability measure on $\mathcal{X}^\infty$, and let $\mu_1,\mu_2,\mu_{12},\mu_{12\ldots n},\ldots$ denote its various marginals. We say that $\mu$ has the asymptotic equipartition property if, considered as the joint distribution of the sequence of random variables $X_1,X_2,\ldots$, the random variables
\begin{equation}
-\frac{1}{n}\log\mu_{12\ldots n}(X_1,X_2,\ldots,X_n)
\end{equation}
converge to a constant $h$ (entropy rate) in probability. In detail, for every $\epsilon>0$,
\begin{equation}
\lim_{n\to\infty}\probability\left(\left\lvert-\frac{1}{n}\log\mu_{12\ldots n}(X_1,X_2,\ldots,X_n)-h\right\rvert>\epsilon\right)=0.
\end{equation}
In \cite{shannon1948mathematical} Shannon proved the asymptotic equipartition property for i.i.d. sources, while McMillan \cite{mcmillan1953basic} and Breiman \cite{breiman1957individual} showed that ergodic sources also have the asymptotic equipartition property.

An important consequence of the asymptotic equipartition property is that, for each $n$, the strings in $\mathcal{X}^n$ can be divided into a typical subset $A^{(n)}_\epsilon$ and its complement in such a way that $\lvert A^{(n)}_\epsilon\rvert=2^{nh+o(n)}$, $\mu_{1\ldots n}(A^{(n)}_\epsilon)=1-o(1)$ and every string in $A^{(n)}_\epsilon$ has probability $2^{-nh+o(n)}$. In the following we show that it is possible to reformulate the asymptotic equipartition property solely in terms of the cardinalities of such high-probability subsets (the implication in one direction for i.i.d. sequences is \cite[Theorem 3.3.1]{cover2012elements}).
\begin{proposition}\label{prop:AEPreformulation}
The following statements are equivalent:
\begin{enumerate}
\item\label{it:AEP} $\mu$ has the asymptotic equipartition property with entropy rate $h$;
\item\label{it:highprobabilitysubsetsize} $\displaystyle\lim_{n\to\infty}\min_{\substack{S\subseteq\mathcal{X}^n  \\  \mu_{12\ldots n}(S)\ge c}}\frac{1}{n}\log\,\lvert S\rvert=h$ for all $c\in(0,1)$.
\end{enumerate}
\end{proposition}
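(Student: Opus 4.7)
The plan is to prove the equivalence by arguing each direction separately, using the exponential-level typical set $A_\epsilon^{(n)}=\{x\in\mathcal{X}^n:2^{-n(h+\epsilon)}\le\mu_{12\ldots n}(x)\le 2^{-n(h-\epsilon)}\}$ as the bridge between the pointwise and cardinality formulations. The forward direction is the textbook argument, while the converse requires choosing the parameter $c$ strategically depending on which side of the typicality window fails.

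For \ref{it:AEP}$\Rightarrow$\ref{it:highprobabilitysubsetsize}, I would first fix $\epsilon>0$ and observe that \ref{it:AEP} implies $\mu_{12\ldots n}(A_\epsilon^{(n)})\to 1$. Since every string in $A_\epsilon^{(n)}$ has probability at least $2^{-n(h+\epsilon)}$, the bound $|A_\epsilon^{(n)}|\le 2^{n(h+\epsilon)}$ follows, which gives $\frac{1}{n}\log|S|\le h+\epsilon$ eventually, by taking $S=A_\epsilon^{(n)}$ (which satisfies $\mu_{12\ldots n}(A_\epsilon^{(n)})\ge c$ for all large $n$). For the matching lower bound, any $S$ with $\mu_{12\ldots n}(S)\ge c$ satisfies $\mu_{12\ldots n}(S\cap A_\epsilon^{(n)})\ge c-o(1)$, and each element in $A_\epsilon^{(n)}$ contributes at most $2^{-n(h-\epsilon)}$ to the measure, so $|S|\ge|S\cap A_\epsilon^{(n)}|\ge(c-o(1))\,2^{n(h-\epsilon)}$, yielding $\frac{1}{n}\log|S|\ge h-\epsilon-o(1)$. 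Letting $\epsilon\to 0$ after $n\to\infty$ gives the claimed limit for every $c\in(0,1)$.

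For \ref{it:highprobabilitysubsetsize}$\Rightarrow$\ref{it:AEP}, I would argue by contrapositive. Suppose \ref{it:AEP} fails: there exist $\epsilon,\delta>0$ and infinitely many $n$ such that either (a) the set $B_n:=\{x\in\mathcal{X}^n:\mu_{12\ldots n}(x)>2^{-n(h-\epsilon)}\}$ satisfies $\mu_{12\ldots n}(B_n)\ge\delta/2$, or (b) the set $L_n:=\{x\in\mathcal{X}^n:\mu_{12\ldots n}(x)<2^{-n(h+\epsilon)}\}$ satisfies $\mu_{12\ldots n}(L_n)\ge\delta/2$. In case (a), every element of $B_n$ contributes more than $2^{-n(h-\epsilon)}$ to $\mu_{12\ldots n}(B_n)\le 1$, so $|B_n|<2^{n(h-\epsilon)}$; taking $c=\delta/2$ and $S=B_n$ shows the minimum in \ref{it:highprobabilitysubsetsize} is bounded above by $h-\epsilon$ infinitely often, contradicting the limit $h$. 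In case (b), I would choose $c=1-\delta/4$ and observe that any $S$ with $\mu_{12\ldots n}(S)\ge c$ must satisfy $\mu_{12\ldots n}(S\cap L_n)\ge\mu_{12\ldots n}(S)+\mu_{12\ldots n}(L_n)-1\ge\delta/4$; since each string in $L_n$ has probability less than $2^{-n(h+\epsilon)}$, this forces $|S|\ge|S\cap L_n|>\frac{\delta}{4}\,2^{n(h+\epsilon)}$, so $\frac{1}{n}\log|S|\ge h+\epsilon+o(1)$, again contradicting the assumed limit.

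The main subtlety, and the only place where care is needed, is the converse direction: one must exploit the fact that \ref{it:highprobabilitysubsetsize} holds for \emph{every} $c\in(0,1)$, choosing a small $c$ to contradict the failure of upper-typicality and a $c$ close to $1$ to contradict the failure of lower-typicality. Everything else reduces to the standard Markov-style counting $|A|\cdot\min_{x\in A}\mu(x)\le\mu(A)\le|A|\cdot\max_{x\in A}\mu(x)$.
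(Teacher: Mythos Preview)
Your proof is correct. The forward direction \ref{it:AEP}$\Rightarrow$\ref{it:highprobabilitysubsetsize} is essentially identical to the paper's argument.

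For the converse \ref{it:highprobabilitysubsetsize}$\Rightarrow$\ref{it:AEP}, you take a genuinely different route. The paper argues directly: for a given $\delta>0$ it builds three nested minimal-cardinality sets $S^{(n)}\subseteq T^{(n)}\subseteq U^{(n)}$ achieving probabilities $\ge\delta$, $\ge 1-2\delta$, $\ge 1-\delta$, then carves out a ``middle'' set $B^{(n)}=T^{(n)}\setminus(S^{(n)}\setminus\{s_0\})$ of probability $\ge 1-3\delta$ on which every string has probability sandwiched between $\delta\lvert U^{(n)}\rvert^{-1}$ and $\lvert S^{(n)}\rvert^{-1}$; since both $\frac{1}{n}\log\lvert S^{(n)}\rvert$ and $\frac{1}{n}\log\lvert U^{(n)}\rvert$ converge to $h$ by hypothesis, this set witnesses typicality directly. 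Your contrapositive instead splits the failure of AEP into the ``too-heavy'' case $\mu(B_n)\ge\delta/2$ and the ``too-light'' case $\mu(L_n)\ge\delta/2$, and for each picks a tailored threshold ($c=\delta/2$ in the first, $c=1-\delta/4$ in the second) to violate the limit in \ref{it:highprobabilitysubsetsize}. Both approaches ultimately exploit that \ref{it:highprobabilitysubsetsize} is assumed for \emph{all} $c$, but yours avoids the bookkeeping of the nested construction and isolates the two failure modes cleanly, at the cost of not producing an explicit high-probability set with controlled pointwise probabilities. One small presentational point: when you write ``infinitely many $n$ such that either (a) \ldots\ or (b) \ldots'', you should make explicit (as you clearly intend) that by pigeonhole one of the two alternatives itself holds infinitely often, so that a single choice of $c$ suffices for the contradiction.
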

\begin{proof}
\ref{it:AEP}$\implies$\ref{it:highprobabilitysubsetsize}: Suppose that $\mu$ has the asymptotic equipartition property and let $c\in(0,1)$ and $\epsilon>0$. Consider the set
\begin{equation}
A^{(n)}_\epsilon=\setbuild{x\in\mathcal{X}^n}{\mu_{1\ldots n}(x)\in[2^{-n(h+\epsilon)},2^{-n(h-\epsilon)}]}.
\end{equation}
Then $\mu(A^{(n)}_\epsilon)\to 1$, therefore for every sufficiently large $n$ this probability exceeds $c$. The total probability of $A^{(n)}_\epsilon$ is less than $1$, but also at least $\lvert A^{(n)}_\epsilon\rvert 2^{-n(h+\epsilon)}$. It follows that
\begin{equation}
\begin{split}
\limsup_{n\to\infty}\min_{\substack{S\subseteq\mathcal{X}^n  \\  \mu_{12\ldots n}(S)\ge c}}\frac{1}{n}\log\,\lvert S\rvert
 & \le \limsup_{n\to\infty}\frac{1}{n}\log\,\lvert A^{(n)}_\epsilon\rvert  \\
 & \le \limsup_{n\to\infty}\frac{1}{n}\log 2^{n(h+\epsilon)}  \\
 & = h+\epsilon.
\end{split}
\end{equation}

For a lower bound, split an arbitrary subset $S\subseteq\mathcal{X}^n$ into its intersection with $A^{(n)}_\epsilon$ and its complement and bound the total probability of both parts as
\begin{equation}
\begin{split}
\mu_{12\ldots n}(S)
 & = \mu_{12\ldots n}(S\setminus A^{(n)}_\epsilon)+\mu_{12\ldots n}(S\cap A^{(n)}_\epsilon)  \\
 & \le \mu_{12\ldots n}(\mathcal{X}^n\setminus A^{(n)}_\epsilon)+\lvert S\vert 2^{-n(h-\epsilon)}.
\end{split}
\end{equation}
Note that the first term on the right hand side converges to $0$. If $\mu_{12\ldots n}(S)\ge c$, this inequality implies
\begin{equation}
\left(c-\mu_{12\ldots n}(\mathcal{X}^n\setminus A^{(n)}_\epsilon)\right)2^{n(h-\epsilon)} \le \lvert S\vert,
\end{equation}
therefore
\begin{equation}
\begin{split}
\liminf_{n\to\infty}\min_{\substack{S\subseteq\mathcal{X}^n  \\  \mu_{12\ldots n}(S)\ge c}}\frac{1}{n}\log\,\lvert S\rvert
 & \ge \liminf_{n\to\infty}\frac{1}{n}\log\left(c-\mu_{12\ldots n}(\mathcal{X}^n\setminus A^{(n)}_\epsilon)\right)2^{n(h-\epsilon)}  \\
 & = h-\epsilon.
\end{split}
\end{equation}
Since both the lower and the upper bound holds for every $\epsilon>0$, the limit exists and is equal to $h$, independently of $c$.

\ref{it:highprobabilitysubsetsize}$\implies$\ref{it:AEP}:
In the other direction, suppose that for every $c\in(0,1)$ the equality
\begin{equation}
\lim_{n\to\infty}\min_{\substack{S\subseteq\mathcal{X}^n  \\  \mu_{12\ldots n}(S)\ge c}}\frac{1}{n}\log\,\lvert S\rvert=h
\end{equation}
holds. Let $\delta>0$ and for all sufficiently large $n$, choose minimal-size subsets $S^{(n)},T^{(n)},U^{(n)}\subseteq\mathcal{X}^n$ subject to $\mu_{12\ldots n}(S^{(n)})\ge \delta$, $\mu_{12\ldots n}(T^{(n)})\ge 1-2\delta$, and $\mu_{12\ldots n}(U^{(n)})\ge 1-\delta$. We may assume that each subset contains the highest-probability strings for a given cardinality and that, in addition, $S^{(n)}\subseteq T^{(n)}\subseteq U^{(n)}$ (which is an extra assumption if there are equal probabilities). Let $s_0^{(n)}\in S^{(n)}$ and $t_0^{(n)}\in T^{(n)}$ be one of the elements with the lowest probability, and let $B^{(n)}=T^{(n)}\setminus (S^{(n)}\setminus\{s_0\})$ and $C^{(n)}=U^{(n)}\setminus (T^{(n)}\setminus\{t_0\})$. Note that, by construction, $s_0^{(n)}$ and $t_0^{(n)}$ are one of the elements with the highest probability in $B^{(n)}$ and $C^{(n)}$, respectively.

By the minimality of $\lvert S^{(n)}\rvert$, $\mu_{12\ldots n}(S^{(n)}\setminus\{s_0^{(n)}\})<\delta$ and therefore
\begin{equation}
\begin{split}
\mu_{12\ldots n}(B^{(n)})
 & = \mu_{12\ldots n}(T^{(n)})-\mu_{12\ldots n}(S^{(n)}\setminus\{s_0^{(n)}\})  \\
 & \ge 1-3\delta.
\end{split}
\end{equation}
By a similar reasoning, we have $\mu_{12\ldots n}(C^{(n)})\ge\delta$.

The estimates
\begin{equation}
\begin{split}
1
 & \ge \mu_{12\ldots n}(S^{(n)})  \\
 & \ge \lvert S^{(n)}\rvert\min_{s\in S^{(n)}}\mu_{12\ldots n}(\{s\})  \\
 & = \lvert S^{(n)}\rvert\mu_{12\ldots n}(\{s_0^{(n)}\})  \\
 & = \lvert S^{(n)}\rvert\max_{s\in B^{(n)}}\mu_{12\ldots n}(\{s\}).
\end{split}
\end{equation}
and
\begin{equation}
\begin{split}
\delta
 & \le \mu_{12\ldots n}(C^{(n)})  \\
 & \le \lvert C^{(n)}\rvert\max_{s\in C^{(n)}}\mu_{12\ldots n}(\{s\})  \\
 & \le \lvert U^{(n)}\rvert\mu_{12\ldots n}(\{t_0^{(n)}\})  \\
 & = \lvert U^{(n)}\rvert\min_{s\in B^{(n)}}\mu_{12\ldots n}(\{s\})
\end{split}
\end{equation}
imply that for all $s\in B^{(n)}$ we have $\delta\lvert U^{(n)}\rvert^{-1}\le\mu_{12\ldots n}(\{s\})\le\lvert S^{(n)}\rvert^{-1}$, i.e.
\begin{equation}
-\frac{1}{n}\log\delta+\frac{1}{n}\log\,\lvert U^{(n)}\rvert\ge-\frac{1}{n}\log\mu_{12\ldots n}(\{s\})\ge\frac{1}{n}\log\,\lvert S^{(n)}\rvert.
\end{equation}

By assumption, both the lower and the upper bound has limit $h$. Therefore for every $\epsilon>0$, the probability that $-\frac{1}{n}\log\mu_{12\ldots n}(X_1,X_2,\ldots,X_n)$ is in $[h-\epsilon,h+\epsilon]$ is at least $\mu_{12\ldots n}(B^{(n)})\ge 1-3\delta$, for every sufficiently large $n$. As $\delta>0$ was arbitrary, the asymptotic equipartition property holds.
\end{proof}

\Cref{prop:AEPreformulation} offers a way to generalize the asymptotic equipartition property to sources with an ambiguous alphabet by replacing the minimal-cardinality subset with the induced subgraph attaining the minimal value for each spectral point, subject to the probability constraint:
\begin{definition}\label{def:generalAEP}
Let $G$ be a graph (encoding the confusability relation of an ambiguous alphabet $V(G)$) and $\mu$ a probability measure on $V(G)^\infty$. The pair $(G,\mu)$ has the asymptotic equipartition property if for every $f\in\Delta(\graphs)$ the limit
\begin{equation}
\lim_{n\to\infty}\min_{\substack{S\subseteq V(G)^n  \\  \mu_{12\ldots n}(S)\ge c}}\frac{1}{n}\log f(G^{\strongproduct n}[S])
\end{equation}
exists and is independent of $c\in(0,1)$.
\end{definition}

\begin{example}\leavevmode
\begin{enumerate}
\item If $G=\complement{K_d}$ (every pair of distinct symbols is distinguishable) then $f(G^{\strongproduct n}[S])=\lvert S\rvert$ holds for every spectral point $f$, therefore \cref{def:generalAEP} recovers the asymptotic equipartition property of $\mu$ in the usual sense.
\item If $G=K_d$ (all letters are confusable) then $f(G^{\strongproduct n}[S])=1$ for all nonempty $S\subseteq V(G)^n$, therefore the pair $(G,\mu)$ has the asymptotic equipartition property for \emph{every} distribution $\mu$, and the common limit is $0$ for every spectral point.
\item If $G$ is an arbitrary nonempty graph and $P\in\distributions(V(G))$, then $(G,P^\infty)$ has the asymptotic equipartition property by \cite[Proposition 3.4]{vrana2021probabilistic}, and the common limit is equal to $F(G,P)$. The special case when $f$ is the fractional clique cover number was proved in \cite{korner1973coding}, identifying the common limit as the graph entropy (in the complementary picture).
\end{enumerate}
\end{example}

For theoretical purposes it is sometimes convenient to rescale the time variable in the sense that we regard a block of output symbols as a single symbol in a product alphabet (e.g. when the dependence between the blocks is simpler than that between the individual random variables). We show that this transformation does not affect the existence of the limit in \cref{def:generalAEP}.
\begin{proposition}\label{prop:blockAEP}
Let $G$ be a graph and $\mu$ a probability measure on $V(G)^\infty$. Then for every $c\in(0,1)$ and $k\in\positiveintegers$ we have
\begin{align}
\liminf_{n\to\infty}\min_{\substack{S\subseteq V(G)^{kn}  \\  \mu_{12\ldots kn}(S)\ge c}}\frac{1}{kn}\log f(G^{\strongproduct kn}[S]) & =\liminf_{n\to\infty}\min_{\substack{S\subseteq V(G)^n  \\  \mu_{12\ldots n}(S)\ge c}}\frac{1}{n}\log f(G^{\strongproduct n}[S])
\intertext{and}
\limsup_{n\to\infty}\min_{\substack{S\subseteq V(G)^{kn}  \\  \mu_{12\ldots kn}(S)\ge c}}\frac{1}{kn}\log f(G^{\strongproduct kn}[S]) & =\limsup_{n\to\infty}\min_{\substack{S\subseteq V(G)^n  \\  \mu_{12\ldots n}(S)\ge c}}\frac{1}{n}\log f(G^{\strongproduct n}[S]).
\end{align}
\end{proposition}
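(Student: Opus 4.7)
Let me denote $\alpha_n := \min_{S\subseteq V(G)^n,\,\mu_{12\ldots n}(S)\ge c}\frac{1}{n}\log f(G^{\strongproduct n}[S])$, so the task is to show $\liminf_n \alpha_n = \liminf_n \alpha_{kn}$ and $\limsup_n \alpha_n = \limsup_n \alpha_{kn}$. Since $\{\alpha_{kn}\}_n$ is a subsequence of $\{\alpha_n\}_n$, the inequalities $\liminf_n \alpha_n \le \liminf_n \alpha_{kn}$ and $\limsup_n \alpha_n \ge \limsup_n \alpha_{kn}$ come for free; only the reverse directions require work.

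The main tool will be a padding bound. Given any $S\subseteq V(G)^m$ with $\mu_{12\ldots m}(S)\ge c$ and any integer $r\ge 0$, the cylinder $\tilde S := S\times V(G)^r \subseteq V(G)^{m+r}$ satisfies $\mu_{12\ldots(m+r)}(\tilde S) = \mu_{12\ldots m}(S)\ge c$, and the induced subgraph factors as $G^{\strongproduct(m+r)}[\tilde S] = G^{\strongproduct m}[S]\strongproduct G^{\strongproduct r}$. Multiplicativity of $f$ then gives $f(G^{\strongproduct(m+r)}[\tilde S]) = f(G^{\strongproduct m}[S])\cdot f(G)^r$, and minimizing over $S$ of length $m$ yields the key inequality
\begin{equation*}
(m+r)\,\alpha_{m+r} \;\le\; m\,\alpha_m + r\log f(G).
\end{equation*}
I also note that $\alpha_m \ge 0$ uniformly in $m$, since $f$ takes value at least $1$ on every nonempty induced subgraph (a consequence of $f(\complement{K_1}) = 1$ and monotonicity).

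To finish, for each $n$ I write $n = km + r$ with $0 \le r < k$ and apply the key inequality in two complementary ways. With source length $km$ and padding $r$, one has $n\alpha_n \le km\,\alpha_{km} + r\log f(G)$, hence $\alpha_n \le \alpha_{km} + \frac{r}{n}\log f(G) = \alpha_{km} + O(1/n)$ (using $km \le n$ together with $\alpha_{km}\ge 0$). With source length $n$ and padding $k-r$, one has $k(m+1)\,\alpha_{k(m+1)} \le n\alpha_n + (k-r)\log f(G)$, hence $\alpha_{k(m+1)} \le \alpha_n + O(1/n)$, i.e.\ $\alpha_n \ge \alpha_{k(m+1)} - O(1/n)$. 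Combining produces the sandwich
\begin{equation*}
\alpha_{k(\lfloor n/k\rfloor + 1)} - O(1/n) \;\le\; \alpha_n \;\le\; \alpha_{k\lfloor n/k\rfloor} + O(1/n),
\end{equation*}
and taking $\liminf_{n\to\infty}$ of the left inequality and $\limsup_{n\to\infty}$ of the right (using that $\lfloor n/k\rfloor\to\infty$ exhausts $\positiveintegers$) supplies the two missing inequalities. I do not foresee any substantive obstacle; the only point requiring care is that padding can only lengthen a string, so one must approach $\alpha_n$ from $\alpha_{km}$ below and from $\alpha_{k(m+1)}$ above rather than by padding downward.
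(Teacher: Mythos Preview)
Your proof is correct and follows essentially the same approach as the paper's: both derive the padding inequality $a_{m+r}\le a_m+r\log f(G)$ (with $a_n=n\alpha_n$) and use it to sandwich $\alpha_n$ between values of $\alpha$ at the nearest multiples of $k$, then take $\liminf$ and $\limsup$. The only cosmetic difference is that the paper keeps the multiplicative factors $\frac{k\lfloor n/k\rfloor}{n}$ and $\frac{k\lceil n/k\rceil}{n}$ explicit (letting them tend to $1$), whereas you invoke $\alpha_m\ge 0$ to absorb them into the $O(1/n)$ error; both are fine.
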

\begin{proof}
Consider the sequence $a_n=\min_{\substack{S\subseteq V(G)^n  \\  \mu_{12\ldots n}(S)\ge c}}\log f(G^{\strongproduct n}[S])$ and let $0\le m\le n$. Let $S\subseteq V(G)^m$ such that $\mu_{12\ldots m}(S)\ge c$ and $a_m=\log f(G^{\strongproduct m}[S])$. Then $\mu_{12\ldots n}(S\times V(G)^{n-m})=\mu_{12\ldots m}(S)\ge c$, therefore
\begin{equation}
\begin{split}
a_n
 & \le \log f(G^{\strongproduct n}[S\times V(G)^{n-m}])  \\
 & = \log f(G^{\strongproduct m}[S])f(G^{\strongproduct n-m})  \\
 & = a_m+(n-m)\log f(G).
\end{split}
\end{equation}
We use this inequality to estimate $a_n/n$ in terms of the similar fractions with indices that are the closest multiples of $k$:
\begin{equation}
\frac{k\lceil\frac{n}{k}\rceil}{n}\frac{a_{k\lceil\frac{n}{k}\rceil}}{k\lceil\frac{n}{k}\rceil}-\frac{k\log f(G)}{n}\le\frac{a_n}{n}\le\frac{k\lfloor\frac{n}{k}\rfloor}{n}\frac{a_{k\lfloor\frac{n}{k}\rfloor}}{k\lfloor\frac{n}{k}\rfloor}+\frac{k\log f(G)}{n}.
\end{equation}
The claim follows by taking the limit inferior and limit superior of these inequalities.
\end{proof}

\section{$F$-rate}\label{sec:Frate}

If $\mu$ is ergodic, then the Shannon--McMillan--Breimann theorem states that
\begin{equation}
\lim_{n\to\infty}\min_{\substack{S\subseteq\mathcal{X}^n  \\  \mu_{12\ldots n}(S)\ge c}}\frac{1}{n}\log\,\lvert S\rvert=\lim_{n\to\infty}\frac{1}{n}\entropy(\mu_{12\ldots n})=:\entropyrate(\mu)
\end{equation}
for all $c\in(0,1)$, where the right hand side is the entropy rate of the stochastic process $X_1,X_2,\ldots$ with joint distribution $\mu$. In contrast, \cref{def:generalAEP} does not involve any other characterization of the limit. An obvious candidate for its value is the limit of $F(G^{\strongproduct n},\mu_{12\ldots n})/n$, assuming it exists, but in this generality we can only establish a lower bound as follows.
\begin{lemma}\label{lem:inducedsubgraphfbound}
Let $G$ be a graph, $P\in\distributions(V(G))$ and $S\subseteq V(G)$. Then
\begin{equation}
\log f(G[S])\ge F(G,P)-(1-P(S))\log\,\lvert V(G)\rvert-1.
\end{equation}
\end{lemma}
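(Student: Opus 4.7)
The plan is to compare $G$ to the disjoint union of the two naturally induced subgraphs, and then apply properties (iii) and (iv) of the probabilistic refinement $F$. Set $T = V(G)\setminus S$ and $H = G[S]\disjointunion G[T]$. The identity on vertices is a homomorphism $\complement{G}\to\complement{H}$, because every non-edge of $G$ within $S$ is a non-edge of $G[S]$, every non-edge within $T$ is a non-edge of $G[T]$, and any pair straddling $S$ and $T$ is automatically non-adjacent in the disjoint union. Hence $G\le H$, and property (iv) of $F$ gives $F(G,P)\le F(H,P)$, with $P$ viewed on $V(H)=V(G)$ via the identity.

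Decomposing $P = P(S)(P|_S)\oplus(1-P(S))(P|_T)$ and applying the disjoint-union formula (property (iii)),
\begin{equation*}
F(H,P) = P(S)\,F(G[S],P|_S) + (1-P(S))\,F(G[T],P|_T) + h(P(S)).
\end{equation*}
Next, bound the three terms separately. Using $\log f(G') = \max_{P'}F(G',P')$, we have $F(G[S],P|_S)\le \log f(G[S])$ and $F(G[T],P|_T)\le\log f(G[T])\le\log\lvert V(G)\rvert$; the last inequality uses that the identity is a homomorphism $\complement{G[T]}\to K_{\lvert V(G)\rvert}$, i.e.\ $G[T]\le\complement{K_{\lvert V(G)\rvert}}$, together with $f(\complement{K_d})=d$. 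Since $\complement{K_1}\le G[S]$ (assuming $S$ nonempty) gives $\log f(G[S])\ge 0$, the factor $P(S)\le 1$ in front of $F(G[S],P|_S)$ may be dropped, and $h(P(S))\le 1$. Combining these estimates yields
\begin{equation*}
F(G,P) \le \log f(G[S]) + (1-P(S))\log\lvert V(G)\rvert + 1,
\end{equation*}
which rearranges to the stated inequality.

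The argument is essentially bookkeeping once the key graph inequality $G\le G[S]\disjointunion G[T]$ is identified, so there is no substantive obstacle. The boundary cases $P(S)\in\{0,1\}$, where one of $P|_S$ or $P|_T$ is ill-defined, are harmless because $h$ vanishes at the endpoints and the corresponding term in the disjoint-union formula drops out; the degenerate case $S=\emptyset$ is excluded by the usual convention for induced subgraphs.
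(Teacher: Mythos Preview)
Your proof is correct and follows essentially the same route as the paper's: compare $G$ to $G[S]\disjointunion G[V(G)\setminus S]$ via the identity map, apply the disjoint-union formula for $F$, and bound the three resulting terms. The paper's write-up is in fact terser (and contains a small notational slip, writing $F(G,Q_i)$ where $F(G[S],Q_1)$ and $F(G[V(G)\setminus S],Q_2)$ are meant), so your more careful handling of the terms and the boundary cases is, if anything, an improvement.
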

\begin{proof}
Decompose $P$ as $P=P(S)Q_1+(1-P(S))Q_2$ where $Q_1,Q_2$ are probability distributions such that $\support Q_1\subseteq S$ and $\support Q_2\cap S=\emptyset$. Then
\begin{equation}
\begin{split}
F(G,P)
 & \le F(G[S]\disjointunion G[V(G)\setminus S],P(S)Q_1+(1-P(S))Q_2)  \\
 & = P(S)F(G,Q_1)+(1-P(S))F(G,Q_2)+h(P(S))  \\
 & \le \log f(G[S])+(1-P(S))\log\,\lvert V(G)\rvert+1,
\end{split}
\end{equation}
which is equivalent to the stated inequality.
\end{proof}

\begin{corollary}
Let $G$ be a graph and $\mu$ a probability measure on $V(G)^\infty$. If $(G,\mu)$ has the asymptotic equipartition property, then for every $f\in\Delta(\graphs)$ the inequality
\begin{equation}
\lim_{n\to\infty}\min_{\substack{S\subseteq V(G)^n  \\  \mu_{12\ldots n}(S)\ge c}}\frac{1}{n}\log f(G^{\strongproduct n}[S])\ge\limsup_{n\to\infty}\frac{1}{n}F(G^{\strongproduct n},\mu_{12\ldots n})
\end{equation}
holds (where $c\in(0,1)$ is arbitrary).
\end{corollary}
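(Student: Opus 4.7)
The plan is to apply \cref{lem:inducedsubgraphfbound} to the strong power $G^{\strongproduct n}$ with $P=\mu_{12\ldots n}$ and $S$ a minimizer on the left hand side, and then let $n\to\infty$. The one subtlety is that the lemma contributes a term $(1-P(S))\log\lvert V(G^{\strongproduct n})\rvert = (1-P(S))\, n\log\lvert V(G)\rvert$, which survives as $(1-P(S))\log\lvert V(G)\rvert$ after dividing by $n$. This error is small only when the probability of $S$ is very close to $1$, so I will need to exploit the hypothesis that the limit on the left hand side is independent of the threshold $c$.

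Concretely, I would fix $c'\in(c,1)$ and, for each sufficiently large $n$, choose $S_n\subseteq V(G)^n$ with $\mu_{12\ldots n}(S_n)\ge c'$ attaining the minimum
\[
a_n := \min_{\substack{S\subseteq V(G)^n \\ \mu_{12\ldots n}(S)\ge c'}}\frac{1}{n}\log f(G^{\strongproduct n}[S]).
\]
Applying \cref{lem:inducedsubgraphfbound} to $(G^{\strongproduct n},\mu_{12\ldots n},S_n)$, dividing by $n$, and using $\mu_{12\ldots n}(S_n)\ge c'$ gives
\[
a_n \ge \frac{1}{n}F(G^{\strongproduct n},\mu_{12\ldots n}) - (1-c')\log\lvert V(G)\rvert - \frac{1}{n}.
\]
By the AEP hypothesis the sequence $a_n$ converges to the common limit $a$ appearing on the left hand side of the corollary, which does not depend on the threshold. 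Rearranging and taking $\limsup$ in $n$ then yields
\[
\limsup_{n\to\infty}\frac{1}{n}F(G^{\strongproduct n},\mu_{12\ldots n}) \le a + (1-c')\log\lvert V(G)\rvert.
\]

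Since $c'\in(c,1)$ was arbitrary and $a$ is independent of $c'$, letting $c'\to 1$ drives the error term to zero and yields the desired inequality. There is no substantive obstacle here; the only point worth emphasizing is that the AEP assumption is used essentially to replace the threshold $c$ by thresholds arbitrarily close to $1$, which is what makes the additive error $(1-c')\log\lvert V(G)\rvert$ negligible in the limit.
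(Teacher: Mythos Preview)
Your proposal is correct and essentially identical to the paper's proof: both apply \cref{lem:inducedsubgraphfbound} to $G^{\strongproduct n}$ with $P=\mu_{12\ldots n}$, divide by $n$, take the limit superior, and then exploit the AEP hypothesis that the left-hand limit is independent of the threshold to send that threshold to $1$ and kill the $(1-c')\log\lvert V(G)\rvert$ error term. The only cosmetic difference is that you introduce a separate symbol $c'$ for the threshold you vary, while the paper simply works with $c$ throughout and then remarks that the limit is independent of $c\in(0,1)$.
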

\begin{proof}
Applying \cref{lem:inducedsubgraphfbound} to the graph $G^{\strongproduct n}$ and every subset $S\subseteq V(G)^n$ satisfying $\mu_{12\ldots n}(S)\ge c$ gives
\begin{equation}
\min_{\substack{S\subseteq V(G)^n  \\  \mu_{12\ldots n}(S)\ge c}}\frac{1}{n}\log f(G^{\strongproduct n}[S])\ge\frac{1}{n}F(G^{\strongproduct n},\mu_{12\ldots n})-(1-c)\log\,\lvert V(G)\rvert-\frac{1}{n}.
\end{equation}
We obtain the statement by taking the limit superior of both sides and using that the left hand side is a limit and independent of $c\in(0,1)$.
\end{proof}

This bound motivates the following definition.
\begin{definition}
Let $G$ be a graph and $\mu$ a distribution on $V(G)^\infty$. Given a spectral point $f$ with probabilistic refinement $F$, the $F$-rate of $(G,\mu)$ is the limit
\begin{equation}
\Frate(G,\mu)=\lim_{n\to\infty}\frac{1}{n}F(G^{\strongproduct n},\mu_{12\ldots n}),
\end{equation}
assuming it exists.
\end{definition}

\begin{problem}
Suppose that $(G,\mu)$ has the asymptotic equipartition property. Does this imply that $\Frate(G,\mu)$ exists and is equal to the common limit in \cref{def:generalAEP}?
\end{problem}

\subsection{Stationary sources}
Next we restrict attention to stationary sources, i.e. assume that $\mu_{m+1,m+2,\ldots,m+n}=\mu_{12\ldots n}$ for all $m,n\in\naturals$. In complete analogy with the entropy rate, the value of $F$ on initial segments forms a subadditive sequence:
\begin{equation}
\begin{split}
F(G^{\strongproduct(m+n)},\mu_{12\ldots(m+n)})
 & = F(G^{\strongproduct m}\strongproduct G^{\strongproduct n},\mu_{12\ldots(m+n)})  \\
 & \le F(G^{\strongproduct m},\mu_{12\ldots m})+F(G^{\strongproduct n},\mu_{m+1,m+2,\ldots,m+n})  \\
 & = F(G^{\strongproduct m},\mu_{12\ldots m})+F(G^{\strongproduct n},\mu_{12\ldots n}),
\end{split}
\end{equation}
where the inequality uses \eqref{eq:subadditivity} and the second equality uses that $\mu$ is stationary. By the Fekete lemma, this proves the following:
\begin{proposition}\label{prop:Frateinfimum}
If $G$ is arbitrary and $\mu$ is stationary on $V(G)^n$, then the $F$-rate exists and is equal to the infimum:
\begin{equation}\label{eq:Frateinf}
\Frate(G,\mu)=\lim_{n\to\infty}\frac{1}{n}F(G^{\strongproduct n},\mu_{12\ldots n})=\inf_{n\ge 1}\frac{1}{n}F(G^{\strongproduct n},\mu_{12\ldots n}).
\end{equation}
\end{proposition}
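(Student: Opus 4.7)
The plan is to apply Fekete's subadditive lemma to the sequence $a_n := F(G^{\strongproduct n}, \mu_{12\ldots n})$, where the necessary subadditivity has essentially been laid out already in the paragraph preceding the statement.

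First, I would formalize that $(a_n)_{n\ge 1}$ is subadditive. Writing $G^{\strongproduct(m+n)} = G^{\strongproduct m}\strongproduct G^{\strongproduct n}$ and observing that $\mu_{12\ldots(m+n)}$ is a joint distribution on $V(G^{\strongproduct m})\times V(G^{\strongproduct n})$ with marginals $\mu_{12\ldots m}$ and $\mu_{m+1,\ldots,m+n}$, the first inequality in \eqref{eq:subadditivity} gives
\begin{equation}
F(G^{\strongproduct(m+n)}, \mu_{12\ldots(m+n)}) \le F(G^{\strongproduct m}, \mu_{12\ldots m}) + F(G^{\strongproduct n}, \mu_{m+1,\ldots,m+n}).
\end{equation}
Stationarity of $\mu$ identifies $\mu_{m+1,\ldots,m+n}$ with $\mu_{12\ldots n}$, yielding $a_{m+n}\le a_m+a_n$.

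Second, I would note that the sequence is bounded below by $0$, so Fekete's lemma produces a finite limit equal to the infimum and not $-\infty$. Indeed, for any graph $H$ the constant map from a one-vertex graph is a homomorphism $\complement{K_1}\to\complement{H}$, so by monotonicity and normalization of any spectral point, $f(H)\ge f(\complement{K_1})=1$; taking logarithms inside the defining limit of $F$ shows $F(H,P)\ge 0$ for every distribution $P$. Applied with $H=G^{\strongproduct n}$ this gives $a_n\ge 0$.

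With subadditivity and nonnegativity in hand, Fekete's lemma states that $\lim_{n\to\infty} a_n/n$ exists and equals $\inf_{n\ge 1} a_n/n$, which is exactly \eqref{eq:Frateinf} and matches the defining formula for $\Frate(G,\mu)$. I do not expect any real obstacle here: the only substantive ingredient is the subadditivity estimate from \eqref{eq:subadditivity}, whose use is straightforward because stationarity converts the right marginal into the desired $\mu_{12\ldots n}$.
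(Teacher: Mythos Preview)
Your proposal is correct and follows essentially the same approach as the paper: establish subadditivity of $a_n=F(G^{\strongproduct n},\mu_{12\ldots n})$ via \eqref{eq:subadditivity} and stationarity, then invoke Fekete's lemma. Your explicit check that $a_n\ge 0$ to rule out the limit being $-\infty$ is a small but welcome addition that the paper leaves implicit.
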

\begin{remark}
If \cref{prob:strongsubadditivity} has a positive answer, then for stationary $\mu$ the limit
\begin{equation}
\lim_{n\to\infty}\left(F(G^{\strongproduct n},\mu_{12\ldots n})-F(G^{\strongproduct(n-1)},\mu_{12\ldots n-1})\right)
\end{equation}
exists and is equal to $\Frate(G,\mu)$.
\end{remark}

As a consequence of \cref{prop:Frateinfimum}, we can bound the $F$-rate from above by determining (or estimating) $F(G^{\strongproduct n},\mu_{12\ldots n})$ for some $n$. For computational purposes it is desirable to have a sequence of lower bounds converging to the $F$-rate as well. Such a sequence can indeed be given provided that the entropy rate of $\mu$ is known. To this end, consider the sequence
\begin{equation}
b_n=\entropy(\mu_{12\ldots n})-F(G^{\strongproduct n},\mu_{12\ldots n})\ge 0.
\end{equation}
It satisfies
\begin{equation}
\begin{split}
b_{n+m}
 & = \entropy(\mu_{12\ldots(m+n)})-F(G^{\strongproduct(m+n)},\mu_{12\ldots(m+n)})  \\
 & \le \entropy(\mu_{12\ldots(m+n)})-F(G^{\strongproduct m},\mu_{12\ldots m})-F(G^{\strongproduct n},\mu_{m+1,m+2,\ldots,m+n)})  \\
 &\qquad+I(1\ldots m:m+1,\ldots,m+n)_{\mu}  \\
 & = \entropy(\mu_{12\ldots m})-F(G^{\strongproduct m},\mu_{12\ldots m})+\entropy(\mu_{m+1,m+2\ldots m+n})-F(G^{\strongproduct n},\mu_{m+1,m+2\ldots m+n})  \\
 & = b_n+b_m
\end{split}
\end{equation}
by \eqref{eq:subadditivity} and using that $\mu$ is stationary. Therefore
\begin{equation}\label{eq:Fratesup}
\begin{split}
\Frate(G,\mu)
 & = \entropyrate(\mu)-\lim_{n\to\infty}\frac{b_n}{n}  \\
 & = \entropyrate(\mu)-\inf_{n\ge 1}\frac{b_n}{n}  \\
 & = \sup_{n\ge 1}\left[\entropyrate(\mu)-\frac{1}{n}\left(\entropy(\mu_{12\ldots n})-F(G^{\strongproduct n},\mu_{12\ldots n})\right)\right].
\end{split}
\end{equation}

\Cref{eq:Frateinf,eq:Fratesup} imply that the $F$-rate is computable whenever $F$, $\mu$ and the entropy rate are computable. However, as the following example indicates, one should not expect to find a simple formula for the limit (say, in terms of finitely many evaluations of $F$), even for seemingly simple graphs and processes.
\begin{example}[hidden Markov model]\label{ex:HMM}
Let $\mathcal{Z}$ be a finite set and $\mu$ the distribution of a stationary Markov process $Z_1,Z_2,\ldots$ with state space $\mathcal{Z}$. Let $\varphi:\mathcal{Z}\to\mathcal{X}$ be a function. Then $\varphi(Z_1),\varphi(Z_2),\ldots$ is a hidden Markov model with distribution $\varphi^{\infty}_*(\mu)$, with hidden states $Z_1,Z_2,\ldots$ and observed output $X_1,X_2,\ldots$ (the formally more general concept, when $X_i$ is allowed to be a random function of $Z_i$, conditionally independent of all the other variables, can also be cast in this form at the cost of enlarging the alphabet for the hidden process).

Let $G$ be the graph with vertex set $\mathcal{Z}$ and such that distinct vertices $z_1$ and $z_2$ form an edge iff $\varphi(z_1)=\varphi(z_2)$. Then $G^{\strongproduct n}$ is a disjoint union of cliques, therefore the probabilistic refinement of \emph{any} spectral point $f$ is equal to the entropy of the induced distribution on the set of components, i.e. $F(G,P)=\entropy(\varphi^n_*(P))$. This means that the $F$-rate of $(G,\mu)$ is equal to the entropy rate of $\varphi^{\infty}_*(\mu)$, for which no closed-form expression is known \cite{blackwell1959entropy}.
\end{example}

\subsection{Continuity}

The entropy rate of a stationary process is continuous with respect to the Ornstein distance \cite{gray1977maximum,gray2011entropy,muraki1992note,polyanskiy2016wasserstein}. In this section we prove a similar continuity estimate for the $F$-rate of stationary processes, for a fixed graph $G$ and spectral point $f$. First we briefly recall the definition of the Ornstein distance \cite{ornstein1973application,ornstein1974ergodic}.

Given a metric $d$ on a finite set $\mathcal{X}$, the Wasserstein distance induced by $d$ on $\distributions(\mathcal{X})$ is defined as
\begin{equation}
\Wassersteindistance(P,Q)=\min_{\substack{R\in\distributions(\mathcal{X}\times\mathcal{X})  \\  R_1=P  \\  R_2=Q}}\sum_{x,x'\in\mathcal{X}}R(x,x')d(x,x'),
\end{equation}
i.e. the minimum of the expected distance of the two components over all possible couplings of $P$ and $Q$.

Let now the set be of the form $\mathcal{X}^n$ for some finite alphabet $\mathcal{X}$ and $n\in\positiveintegers$, and consider the normalized Hamming distance
\begin{equation}
d_H(x,x')=\frac{1}{n}\left\lvert\setbuild{i\in[n]}{x_i\neq x'_i}\right\rvert.
\end{equation}
The Wasserstein distance induced by $d_H$ is called the Ornstein distance or d-bar distance, denoted by $\Ornsteindistance$. In addition, if $\mu$ and $\nu$ are stationary probability distributions on $\mathcal{X}^\infty$, then we define their distance to be
\begin{equation}
\Ornsteindistance(\mu,\nu)=\lim_{n\to\infty}\Ornsteindistance(\mu_{12\ldots n},\nu_{12\ldots n}).
\end{equation}
We note that the expected value of $d_H(X,X')$ when $X,X'$ are $\mathcal{X}^n$-valued random variables may also be written as
\begin{equation}
\frac{1}{n}\sum_{i=1}^n\probability(X_i\neq X'_i),
\end{equation}
and in the stationary case the distance is equal to the minimum probability that the first symbols are different over all stationary couplings of the two measures.

We start with a generalization of the continuity estimate \cite[Proposition 3.3 (iv)]{vrana2021probabilistic} to what can be viewed as a conditional version of $F$ (although with no known direct information-theoretic interpretation). It will be convenient to introduce the function $\eta(t)=(1+t)h(1/(1+t))$, which is continuous, concave and increasing on $[0,1]$, and satisfies $\eta(0)=0$ and $\eta(1)=2$.
\begin{lemma}
Let $G,H$ be graphs and $P,Q\in\distributions(V(G)\times V(H))$. Then
\begin{equation}\label{eq:conditionalFcontinuity}
\left\lvert\left(F(G\strongproduct H,P)-F(H,P_H)\right)-\left(F(G\strongproduct H,Q)-F(H,Q_H)\right)\right\rvert\le\delta\log f(G)+2\eta(\delta),
\end{equation}
where $\delta=\frac{1}{2}\norm[1]{P-Q}$ and e.g. $P_G$, $P_H$ denote the marginals of $P$ on $V(G)$ and $V(H)$.
\end{lemma}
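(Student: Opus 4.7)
The plan is to first establish a \emph{pseudo-affine} property of $F$: for any graph $G$, distributions $P_1,P_2\in\distributions(V(G))$, and $\lambda\in[0,1]$,
\begin{equation}\label{eq:pseudoaffine}
\lambda F(G,P_1)+(1-\lambda)F(G,P_2)\le F(G,\lambda P_1+(1-\lambda)P_2)\le\lambda F(G,P_1)+(1-\lambda)F(G,P_2)+h(\lambda).
\end{equation}
The lower bound is concavity (property (i)). For the upper bound I would construct a random complement-homomorphism $\varphi:V(G)\to V(G\disjointunion G)$ sending each vertex $v$ independently to its copy in the first summand with probability $\lambda P_1(v)/(\lambda P_1(v)+(1-\lambda)P_2(v))$, and to the second otherwise. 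Every deterministic such map is a homomorphism $\complement{G}\to\complement{G\disjointunion G}$, since vertices in distinct copies of a disjoint union are distinguishable, so by \cref{it:Fmonotone} we have $F(G,P)\le F(G\disjointunion G,\varphi_*P)$ almost surely, where $P=\lambda P_1+(1-\lambda)P_2$. Taking expectations and applying concavity of $F$ in its second argument as Jensen's inequality gives $F(G,P)\le F(G\disjointunion G,\lambda P_1\oplus(1-\lambda)P_2)$, which evaluates to the right-hand side of \eqref{eq:pseudoaffine} by property (iii).

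Next I would couple $P$ and $Q$ via their pointwise maximum. Writing $P\wedge Q$ and $P\vee Q$ for the pointwise minimum and maximum on $V(G)\times V(H)$, set
\begin{equation}
P'=\frac{P-P\wedge Q}{\delta},\qquad Q'=\frac{Q-P\wedge Q}{\delta},\qquad R=\frac{P\vee Q}{1+\delta},
\end{equation}
all of which are probability distributions. A direct computation gives the two convex decompositions
\begin{equation}
R=\frac{1}{1+\delta}P+\frac{\delta}{1+\delta}Q'=\frac{1}{1+\delta}Q+\frac{\delta}{1+\delta}P'.
\end{equation}
Applying \eqref{eq:pseudoaffine} with $\lambda=1/(1+\delta)$ to these decompositions on $K:=G\strongproduct H$, and separately to the induced marginal decompositions on $V(H)$ such as $R_H=\frac{1}{1+\delta}P_H+\frac{\delta}{1+\delta}Q'_H$, then multiplying through by $1+\delta$ (so that the correction $h(1/(1+\delta))$ becomes $\eta(\delta)$), I obtain
\begin{equation}
(1+\delta)F(K,R)-\delta F(K,Q')-\eta(\delta)\le F(K,P)\le(1+\delta)F(K,R)-\delta F(K,Q')
\end{equation}
together with the analogous two-sided bounds for $F(K,Q)$, $F(H,P_H)$, and $F(H,Q_H)$.

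Choosing the sides of these four estimates appropriately so that the bulk contributions $(1+\delta)F(K,R)$ and $(1+\delta)F(H,R_H)$ cancel, the difference $\bigl(F(K,P)-F(H,P_H)\bigr)-\bigl(F(K,Q)-F(H,Q_H)\bigr)$ is bounded above by
\begin{equation}
\delta\bigl[F(K,P')-F(H,P'_H)\bigr]-\delta\bigl[F(K,Q')-F(H,Q'_H)\bigr]+2\eta(\delta).
\end{equation}
By \cref{prop:conditionalFnonnegative} the conditional quantity $F(K,\cdot)-F(H,(\cdot)_H)$ is nonnegative, while property (ii) combined with $F(G,\cdot)\le\log f(G)$ shows that it is at most $\log f(G)$; hence the first two terms contribute at most $\delta\log f(G)$, and swapping $P$ with $Q$ gives the matching lower bound.

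The main obstacle is the pseudo-affine estimate \eqref{eq:pseudoaffine}, whose proof requires packaging an entire family of deterministic complement-homomorphisms into a single averaged bound through concavity. Once it is available, the key technical choice is to decompose via the ``join'' $R=(P\vee Q)/(1+\delta)$ rather than the more symmetric ``meet'': this produces $\eta(\delta)$ rather than $h(\delta)$ and, crucially, causes the marginal decomposition on $V(H)$ to inherit the same shape, so that the $(1+\delta)F(K,R)$ and $(1+\delta)F(H,R_H)$ contributions cancel in the final difference, avoiding the spurious $\delta\log f(H)$ that a naive application of two separate continuity estimates would produce.
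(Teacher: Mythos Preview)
Your proof is correct, and the core decomposition is exactly the paper's: your $R=(P\vee Q)/(1+\delta)$ with weights $\lambda=\delta/(1+\delta)$ coincides with the paper's $R=\lambda P'+(1-\lambda)P=\lambda Q'+(1-\lambda)Q$ (your $P',Q'$ are the paper's $Q',P'$), and both arguments end by bounding the residual ``conditional'' term $\delta\bigl[F(K,\cdot)-F(H,(\cdot)_H)\bigr]\in[0,\delta\log f(G)]$ via \cref{prop:conditionalFnonnegative} and \eqref{eq:subadditivity}.

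The packaging differs. The paper introduces the auxiliary functions $A(P)=F(G\strongproduct H,P)+\entropy(P_H)-F(H,P_H)$ and $B(P)=\entropy(P)-F(G\strongproduct H,P)+F(H,P_H)$, cites their concavity from \cite[Proposition~3.3]{vrana2021probabilistic}, applies concavity once to each at $R$, and then unpacks. You instead prove the pseudo-affine bound \eqref{eq:pseudoaffine} directly --- your random-splitting argument into $G\disjointunion G$ is a neat self-contained proof of what amounts to concavity of $\entropy-F$ --- and apply it four times (to $K$ and $H$, both decompositions of $R$), then select sides so that the $(1+\delta)F(K,R)$ and $(1+\delta)F(H,R_H)$ contributions cancel. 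Your route is more elementary in that it avoids the external citation; the paper's route is slightly more economical in bookkeeping since the cancellation is built into $A+B=\entropy+\entropy(\cdot_H)$ from the start.
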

\begin{proof}
The idea of the proof is similar to those in \cite{alicki2004continuity,winter2016tight,vrana2021probabilistic}. Consider the functions
\begin{align}
A(P) & = F(G\strongproduct H,P)+\entropy(P_H)-F(H,P_H)  \\
B(P) & = \entropy(P)-F(G\strongproduct H,P)+F(H,P_H).
\end{align}
Both functions are concave by \cite[Proposition 3.3]{vrana2021probabilistic}. Note also that $0\le F(G\strongproduct H,P)-F(H,P_H)\le F(G,P_G)\le\log f(G)$ by \eqref{eq:subadditivity} and \cref{prop:conditionalFnonnegative}.

Let $P'=\delta^{-1}(P-Q)_-$ and $Q'=\delta^{-1}(P-Q)_+$ be the normalized negative and positive parts so that $P',Q'\in\distributions(V(G)\times V(H))$ and $\lambda P'+(1-\lambda)P=\lambda Q'+(1-\lambda)Q=:R$ with $\lambda=\delta/(1+\delta)$. By concavity,
\begin{align}
A(R) & \ge \lambda A(P')+(1-\lambda)A(P)  \\
B(R) & \ge \lambda B(Q')+(1-\lambda)B(Q),
\end{align}
therefore
\begin{equation}
\begin{split}
\entropy(R)+\entropy(R_H)
 & = A(R)+B(R)  \\
 & \ge \lambda(A(P')+B(Q'))+(1-\lambda)(A(P)+B(Q)).
\end{split}
\end{equation}
We use $\entropy(\lambda Q'+(1-\lambda)Q)-\lambda\entropy(Q')-(1-\lambda)\entropy(Q)\le h(\lambda)$ and rearrange to get
\begin{equation}
\begin{split}
2h(\lambda)
 & \ge \entropy(\lambda Q'+(1-\lambda)Q)-\lambda\entropy(Q')-(1-\lambda)\entropy(Q)  \\
 & \qquad +\entropy(\lambda P'_H+(1-\lambda)P_H)-\lambda\entropy(P'_H)-(1-\lambda)\entropy(P_H)  \\
 & \ge \lambda\left[\left(F(G\strongproduct H,P')-F(H,P'_H)\right)-\left(F(G\strongproduct H,Q')-F(H,Q'_H)\right)\right]  \\
 & \qquad +(1-\lambda)\left[\left(F(G\strongproduct H,P)-F(H,P_H)\right)-\left(F(G\strongproduct H,Q)-F(H,Q_H)\right)\right]  \\
 & \ge -\lambda\log f(G)  \\
 & \qquad +(1-\lambda)\left[\left(F(G\strongproduct H,P)-F(H,P_H)\right)-\left(F(G\strongproduct H,Q)-F(H,Q_H)\right)\right],
\end{split}
\end{equation}
which implies \eqref{eq:conditionalFcontinuity}, noting that the roles of $P$ and $Q$ are symmetric.
\end{proof}
\begin{corollary}\label{cor:Fsamemarginalcontinuity}
Let $G,H$ be graphs and $P,Q\in\distributions(V(G)\times V(H))$ such that $P_H=Q_H$. Then
\begin{equation}\label{eq:Fsamemarginalcontinuity}
\left\lvert F(G\strongproduct H,P)-F(G\strongproduct H,Q)\right\rvert\le\delta\log f(G)+2\eta(\delta),
\end{equation}
where $\delta=\frac{1}{2}\norm[1]{P-Q}$.
\end{corollary}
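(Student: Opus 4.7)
The plan is essentially to invoke the preceding lemma directly and observe that the conditional-type correction terms cancel under the hypothesis that the marginals on $V(H)$ agree.

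More concretely, the lemma gives
\begin{equation*}
\left\lvert\left(F(G\strongproduct H,P)-F(H,P_H)\right)-\left(F(G\strongproduct H,Q)-F(H,Q_H)\right)\right\rvert\le\delta\log f(G)+2\eta(\delta),
\end{equation*}
with $\delta=\tfrac{1}{2}\norm[1]{P-Q}$. Since the assumption $P_H=Q_H$ forces $F(H,P_H)=F(H,Q_H)$ (the functional $F(H,\cdot)$ is a function of the distribution on $V(H)$), the two $F(H,\cdot)$ terms inside the absolute value cancel, leaving exactly
\begin{equation*}
\left\lvert F(G\strongproduct H,P)-F(G\strongproduct H,Q)\right\rvert\le\delta\log f(G)+2\eta(\delta).
\end{equation*}

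There is no real obstacle here: the work has already been done in the lemma, and the corollary just records the clean specialization to the case of equal $V(H)$-marginals. The only thing worth remarking is that the bound on $\delta$ is controlled by the total variation of the joint distributions, not by the total variation of their conditionals given $V(H)$; in particular, the estimate degrades with the $L^1$ distance of the full couplings even though the marginals coincide, which is precisely the form needed in applications where $P$ and $Q$ arise as different couplings with a common marginal.
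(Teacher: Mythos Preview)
Your proof is correct and is exactly the intended argument: the paper states this as a corollary with no separate proof, since it follows immediately from the preceding lemma by noting that $P_H=Q_H$ makes the $F(H,\cdot)$ terms cancel.
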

It is crucial that the bounds \eqref{eq:conditionalFcontinuity} and \eqref{eq:Fsamemarginalcontinuity} do not depend on $H$. In contrast, a direct application of the continuity estimate from \cite{vrana2021probabilistic} would only give a bound of the form $\delta\log(\lvert V(G)\rvert\lvert V(H)\rvert-1)+\eta(\delta)$ (or, by a slight modification, $\delta\log f(G\strongproduct H)+\eta(\delta)$).

\begin{proposition}
Let $G$ be a graph.
\begin{enumerate}
\item\label{it:ratecontinuityfinite} If $P,Q\in\distributions(V(G)^n)$, then
\begin{equation}\label{eq:FOrnsteincontinuity}
\left\lvert \frac{1}{n}F(G^{\strongproduct n},P)-\frac{1}{n}F(G^{\strongproduct n},Q)\right\rvert\le \Ornsteindistance(P,Q)\log f(G)+2\eta(\Ornsteindistance(P,Q)).
\end{equation}
\item\label{it:ratecontinuityinfinite} If $\mu$ and $\nu$ are stationary probability measures on $V(G)^\infty$, then
\begin{equation}
\lvert\Frate(G,\mu)-\Frate(G,\nu)\rvert\le \Ornsteindistance(\mu,\nu)\log f(G)+2\eta(\Ornsteindistance(\mu,\nu)).
\end{equation}
\end{enumerate}
\end{proposition}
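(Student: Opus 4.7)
The strategy for part~(i) is a telescoping argument: we interpolate between $P$ and $Q$ along an optimal coupling, one coordinate at a time, and apply \cref{cor:Fsamemarginalcontinuity} at each step. The critical input is that the bound in \cref{cor:Fsamemarginalcontinuity} scales with $\log f(G)$ for the single ``changing'' factor rather than with $\log f(G^{\strongproduct n})$; this is what keeps the accumulated per-step error from blowing up with $n$.

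Concretely, fix an optimal coupling $R\in\distributions(V(G)^n\times V(G)^n)$ of $P$ and $Q$ and draw $(X_1,\ldots,X_n,X'_1,\ldots,X'_n)\sim R$, so that $\sum_{k=1}^n \probability(X_k\neq X'_k)=n\,\Ornsteindistance(P,Q)$. For $k\in\{0,1,\ldots,n\}$ let $P^{(k)}$ be the distribution of the hybrid $(X_1,\ldots,X_k,X'_{k+1},\ldots,X'_n)$, so that $P^{(0)}=Q$ and $P^{(n)}=P$. Splitting the alphabet as $V(G)^n=V(G)\times V(G^{\strongproduct(n-1)})$ with the first factor corresponding to coordinate $k$, the laws $P^{(k)}$ and $P^{(k-1)}$ share the same marginal on $V(G^{\strongproduct(n-1)})$, namely the joint law of $(X_1,\ldots,X_{k-1},X'_{k+1},\ldots,X'_n)$. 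Applying \cref{cor:Fsamemarginalcontinuity} to $G\strongproduct G^{\strongproduct(n-1)}\cong G^{\strongproduct n}$ therefore yields
\[
\lvert F(G^{\strongproduct n},P^{(k)})-F(G^{\strongproduct n},P^{(k-1)})\rvert \le \delta_k\log f(G)+2\eta(\delta_k),
\]
with $\delta_k=\tfrac{1}{2}\norm[1]{P^{(k)}-P^{(k-1)}}\le\probability(X_k\neq X'_k)$ (this last inequality by using the coupling of the two hybrids inherited from $R$). Summing over $k$ via the triangle inequality, dividing by $n$, using $\sum_k \delta_k\le n\,\Ornsteindistance(P,Q)$ for the linear term, and Jensen's inequality applied to the concave increasing $\eta$ for the second term gives \eqref{eq:FOrnsteincontinuity}.

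Part~(ii) follows by specializing part~(i) to $P=\mu_{12\ldots n}$, $Q=\nu_{12\ldots n}$ and letting $n\to\infty$: the left-hand side converges to $\lvert\Frate(G,\mu)-\Frate(G,\nu)\rvert$ by \cref{prop:Frateinfimum}, the Ornstein distances converge to $\Ornsteindistance(\mu,\nu)$ by definition, and $\eta$ is continuous. The main conceptual hurdle, and the place where a naive approach would fail, is arranging the hybridization so that consecutive distributions share $(n-1)$ of the $n$ marginals; this is exactly what having a single joint coupling $R$ (rather than $n$ independently chosen couplings) accomplishes, and is what allows the per-step cost to be the one-letter constant $\log f(G)$ rather than the prohibitive $\log f(G^{\strongproduct n})=n\log f(G)$.
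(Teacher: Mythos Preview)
Your proposal is correct and follows essentially the same argument as the paper: an optimal coupling, a one-coordinate-at-a-time hybrid interpolation, the observation that consecutive hybrids agree on the remaining $n-1$ coordinates so that \cref{cor:Fsamemarginalcontinuity} applies with the single-letter constant $\log f(G)$, the bound $\delta_k\le\probability(X_k\neq X'_k)$, and finally telescoping plus Jensen for the concave increasing $\eta$. Part~(ii) is likewise obtained in the paper exactly as you describe, by passing to the limit along initial segments.
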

\begin{proof}
\ref{it:ratecontinuityfinite}:
Let $X_1,\ldots,X_n,Y_1,\ldots,Y_n$ be $V(G)$-valued random variables such that $P$ is the joint distribution of $X_1,\ldots,X_n$, $Q$ is the joint distribution of $Y_1,\ldots,Y_n$, and $\sum_{i=1}^n\probability(X_i\neq Y_i)=n\Ornsteindistance(P,Q)$. Let $R^{(i)}$ denote the joint distribution of $Y_1,Y_2,\ldots,Y_i,X_{i+1},X_{i+2},\ldots X_n$ (in particular, $R^{(0)}=P$ and $R^{(n)}=Q$). Then $\frac{1}{2}\norm[1]{R^{(i)}-R^{(i-1)}}\le\probability(X_i\neq Y_i)$ and the joint distributions of the entries with indices $1,2,\ldots,i-1,i+1,\ldots n$ under $R^{(i)}$ and $R^{(i-1)}$ agree, therefore
\begin{equation}
\begin{split}
\left\lvert F(G^{\strongproduct n},R^{(i)})-F(G^{\strongproduct n},R^{(i-1)})\right\rvert
 & \le \frac{1}{2}{\textstyle\norm[1]{R^{(i)}-R^{(i-1)}}}\log f(G)+2\eta\left({\textstyle\frac{1}{2}\norm[1]{R^{(i)}-R^{(i-1)}}}\right)  \\
 & \le \probability(X_i\neq Y_i)\log f(G)+2\eta(\probability(X_i\neq Y_i))
\end{split}
\end{equation}
by \cref{cor:Fsamemarginalcontinuity}. This implies
\begin{equation}
\begin{split}
\left\lvert \frac{1}{n}F(G^{\strongproduct n},P)-\frac{1}{n}F(G^{\strongproduct n},Q)\right\rvert
 & \le \frac{1}{n}\sum_{i=1}^n \left\lvert F(G^{\strongproduct n},R^{(i)})-F(G^{\strongproduct n},R^{(i-1)})\right\rvert  \\
 & \le \frac{1}{n}\sum_{i=1}^n\big[\probability(X_i\neq Y_i)\log f(G)+2\eta(\probability(X_i\neq Y_i))\big]  \\
 & \le \Ornsteindistance(P,Q)\log f(G)+2\eta(\Ornsteindistance(P,Q)),
\end{split}
\end{equation}
using the Jensen inequality for $\eta$.

\ref{it:ratecontinuityinfinite}:
We apply \eqref{eq:FOrnsteincontinuity} to the inital segments $\mu_{12\ldots n}$ and $\nu_{12\ldots n}$:
\begin{equation}
\left\lvert \frac{1}{n}F(G^{\strongproduct n},\mu_{12\ldots n})-\frac{1}{n}F(G^{\strongproduct n},\nu_{12\ldots n})\right\rvert\le \Ornsteindistance(\mu_{12\ldots n},\nu_{12\ldots n})\log f(G)+2\eta(\Ornsteindistance(\mu_{12\ldots n},\nu_{12\ldots n})).
\end{equation}
The claim follows by taking the limit $n\to\infty$.
\end{proof}

\section{Markov chains}\label{sec:Markov}
The simplest stochastic processes with dependence are Markov chains, characterized by the property that the distribution of each symbol, conditioned on the immediately preceding one, is conditionally independent of all the earlier symbols. The joint distribution of a stationary Markov chain is determined by the distribution of the initial state and the transition matrix. We will assume that the Markov chain is irreducible, in which case the stationary distribution is unique, therefore only the transition probabilities $W(b|a)$ ($a,b\in\mathcal{X}$) need to be specified. The same information is contained in the joint distribution of the first two symbols, which is often a more convenient parametrization: $P(a,b)=P_{12}(a,b)=P_1(a)W(b|a)$, where the first marginal $P_1$ is the stationary distribution (also equal to $P_2$). The corresponding stationary distribution on $\mathcal{X}^\infty$ will be denoted by $\mu^P$.

The probability of a particular string with respect to a time-invariant Markov chain depends only on the first symbol and the numbers of transitions between each pair of symbol. We say that the second-order type of a string $x\in\mathcal{X}^n$ is $Q\in\distributions(\mathcal{X}\times\mathcal{X})$ if $Q$ is the joint type of the length-$(n-1)$ strings $(x_1,\ldots,x_{n-1})$ and $(x_2,\ldots,x_n)$, i.e. $Q(a,b)=\frac{1}{n-1}\lvert\setbuild{i\in[n-1]}{x_i=a,x_{i+1}=b}\rvert$. The probability of $x$ is
\begin{equation}
\begin{split}
\mu^P_{12\ldots n}(x)
 & = P_1(x_1)\prod_{i=1}^{n-1}W(x_{i+1}|x_i)  \\
 & = P_1(x_1)\prod_{a,b}W(b|a)^{(n-1)Q(a,b)}  \\
 & = P_1(x_1)2^{(n-1)\sum_{a,b}Q(a,b)\log W(b|a)}  \\
 & = P_1(x_1)2^{(n-1)\sum_{a,b}Q(a,b)(\log P(a,b)-\log P_1(a))}  \\
 & = P_1(x_1)2^{-(n-1)\left[\entropy(Q)+\relativeentropy{Q}{P}-\entropy(Q_1)-\relativeentropy{Q_1}{P_1}\right]}  \\
 & = P_1(x_1)2^{-(n-1)\left[\entropy(2|1)_Q+\relativeentropy{Q_{2|1}}{W}\right]}.
\end{split}
\end{equation}
The set of possible second-order types of length-$n$ sequences $x$ over the alphabet $\mathcal{X}$, with first and last letters $x_1=a$ and $x_n=b$ will be denoted by $\secondorderdistributions[n](\mathcal{X},a,b)$, and for $Q\in\secondorderdistributions[n](\mathcal{X},a,b)$ we denote by $\secondordertypeclass{n}{Q,a,b}$ the set of strings $x$ having second-order type $Q$ and starting with $a$ and ending with $b$. The union of the sets $\secondorderdistributions[n](\mathcal{X},a,b)$ over $a,b\in\mathcal{X}$ will be denoted by $\secondorderdistributions[n](\mathcal{X})$.

We denote by $\secondorderdistributions(\mathcal{X})$ the set of probability distributions on $\mathcal{X}$ with equal marginals and such that the Markov chain $\mu^P$ is irreducible, when considered on the state space $\support P_1$.

For a subset $\mathcal{U}\subseteq\distributions(\mathcal{X}\times\mathcal{X})$ we introduce the notation $\secondordertypeclass{n}{\mathcal{U}}=\bigcup_{a,b\in\mathcal{X}}\bigcup_{Q\in\secondorderdistributions[n](\mathcal{X},a,b)\cap\mathcal{U}}\secondordertypeclass{n}{Q,a,b}$, and use similar notation for unions of first-order type classes. In particular, the role of $\mathcal{U}$ will be played by an $\epsilon$-ball $\ball{\epsilon}{P}$ in $\distributions(\mathcal{X}\times\mathcal{X})$ with respect to the total variation distance.

The basic estimates for second-order type classes are summarized in the following lemma (see \cite{boza1971asymptotically}, \cite{whittle1955some,billingsley1961statistical} for exact values, and the review \cite{csiszar1998method} on various type concepts and applications).
\begin{lemma}\label{lem:typeclassestimates}
Let $\mathcal{X}$ be a finite set, $n\in\naturals$, and $Q\in\secondorderdistributions[n](\mathcal{X})$, $a,b\in\support Q_1$.
\begin{enumerate}
\item $\lvert\secondorderdistributions[n](\mathcal{X},a,b)\rvert\le n^{\lvert\mathcal{X}\rvert^2}$;
\item $\lvert\secondordertypeclass{n}{Q,a,b}\rvert=2^{n\entropy(2|1)_Q+o(n)}$;
\item $\mu^P_{12\ldots n}(\secondordertypeclass{n}{Q,a,b})=2^{-n\relativeentropy{Q_{2|1}}{W}+o(n)}$.
\end{enumerate}
\end{lemma}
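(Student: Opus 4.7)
The plan is to prove (i), then (ii), and finally deduce (iii) from (ii). Part (i) is immediate: a type $Q\in\secondorderdistributions[n](\mathcal{X},a,b)$ is determined by the $\lvert\mathcal{X}\rvert^2$ integer-valued transition counts $(n-1)Q(a',b')\in\{0,1,\ldots,n-1\}$, which leaves at most $n^{\lvert\mathcal{X}\rvert^2}$ possibilities.

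For the upper bound in (ii), I would introduce the auxiliary stationary Markov chain $\nu^Q$ with stationary distribution $Q_1$ and transition kernel $Q_{2|1}$; this chain is well-defined and irreducible on $\support Q_1$ precisely because $Q\in\secondorderdistributions(\mathcal{X})$. Every $x\in\secondordertypeclass{n}{Q,a,b}$ has the same $\nu^Q$-probability $Q_1(a)\cdot 2^{-(n-1)\entropy(2|1)_Q}$, by the same computation already carried out in the excerpt with $W$ replaced by $Q_{2|1}$ (so that the divergence term vanishes). The bound $\nu^Q_{12\ldots n}(\secondordertypeclass{n}{Q,a,b})\le 1$ then yields $\lvert\secondordertypeclass{n}{Q,a,b}\rvert\le Q_1(a)^{-1}2^{(n-1)\entropy(2|1)_Q}$, hence $\log\lvert\secondordertypeclass{n}{Q,a,b}\rvert\le n\entropy(2|1)_Q+O(1)$, well inside the $o(n)$ tolerance.

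The matching lower bound is the main obstacle. The cleanest exact approach is via Whittle's formula \cite{whittle1955some,billingsley1961statistical}, which expresses $\lvert\secondordertypeclass{n}{Q,a,b}\rvert$ as an explicit product of multinomial coefficients, one per row of the transition count matrix, multiplied by a cofactor of $I-Q_{2|1}$ that counts Eulerian trails in the associated multidigraph on $\support Q_1$ via the BEST theorem. Applying Stirling's formula to each multinomial factor produces the leading contribution $(n-1)\sum_{a'}Q_1(a')\entropy(Q_{2|1}(\cdot|a'))=(n-1)\entropy(2|1)_Q$ to $\log\lvert\secondordertypeclass{n}{Q,a,b}\rvert$, while the cofactor contributes only a polynomial-in-$n$ factor, which is absorbed into $o(n)$. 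I would then follow the bookkeeping in \cite{boza1971asymptotically,csiszar1998method} to check that the cofactor is polynomially bounded above and bounded away from $0$ uniformly in the types under consideration; it is the irreducibility hypothesis $Q\in\secondorderdistributions(\mathcal{X})$ that makes this bookkeeping go through.

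Part (iii) is then a direct consequence of (ii) together with the per-string probability formula already derived in the excerpt: every $x\in\secondordertypeclass{n}{Q,a,b}$ satisfies $\mu^P_{12\ldots n}(x)=P_1(a)\cdot 2^{-(n-1)[\entropy(2|1)_Q+\relativeentropy{Q_{2|1}}{W}]}$, so multiplying by $\lvert\secondordertypeclass{n}{Q,a,b}\rvert=2^{n\entropy(2|1)_Q+o(n)}$ and collecting the $n$- and constant-order terms gives $\mu^P_{12\ldots n}(\secondordertypeclass{n}{Q,a,b})=2^{-n\relativeentropy{Q_{2|1}}{W}+o(n)}$, as claimed.
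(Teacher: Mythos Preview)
The paper does not prove this lemma; it merely states the estimates and refers to \cite{boza1971asymptotically,whittle1955some,billingsley1961statistical,csiszar1998method}. Your sketch is exactly the standard argument recorded in those references --- counting for (i), the probability-$\le 1$ trick with the auxiliary Markov measure for the upper half of (ii), Whittle's formula plus Stirling for the lower half, and the constant per-string probability for (iii) --- so it matches what the paper is implicitly citing.

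One minor slip worth fixing: you justify the auxiliary chain by writing ``because $Q\in\secondorderdistributions(\mathcal{X})$'', but the hypothesis is only $Q\in\secondorderdistributions[n](\mathcal{X})$. For $a\neq b$ the two marginals satisfy $Q_1-Q_2=\frac{1}{n-1}(\delta_a-\delta_b)$, so $Q$ need not lie in $\secondorderdistributions(\mathcal{X})$ at all. This does not damage your argument --- the upper bound requires no stationarity (any initial law supported on $\support Q_1$ and kernel $Q_{2|1}$ will do), and for the lower bound the BEST-theorem cofactor is a positive integer whenever $\secondordertypeclass{n}{Q,a,b}\neq\emptyset$, which is guaranteed by $Q\in\secondorderdistributions[n](\mathcal{X},a,b)$ --- but the justification should be rephrased accordingly.
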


\begin{proposition}\label{prop:MCbounds}
Let $G$ be a graph and $P\in\secondorderdistributions(V(G))$, $f\in\Delta(\graphs)$ and $c\in(0,1)$. Then
\begin{enumerate}
\item\label{it:MCupperbound} $\displaystyle\limsup_{n\to\infty}\min_{\substack{S\subseteq V(G)^n  \\  \mu^P_{12\ldots n}(S)\ge c}}\frac{1}{n}\log f(G^{\strongproduct n}[S])\le F(G,P_1),
$
\item\label{it:MClowerbound} $\displaystyle\liminf_{n\to\infty}\min_{\substack{S\subseteq V(G)^n  \\  \mu^P_{12\ldots n}(S)\ge c}}\frac{1}{n}\log f(G^{\strongproduct n}[S])\ge F(G,P_1)-\mutualinformation(1:2)_P.
$
\end{enumerate}
\end{proposition}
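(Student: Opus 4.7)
The approach is to use concentration of the second-order type of $\mu^P_{12\ldots n}$ around $P$ (a Markov-chain law of large numbers), together with the key facts from \cref{lem:typeclassestimates}: all strings in a second-order type class $\secondordertypeclass{n}{Q,a,b}$ have the same Markov probability, and $\lvert\secondordertypeclass{n}{Q,a,b}\rvert = 2^{n \entropy(2|1)_Q + o(n)}$. Both parts then reduce to statements about first-order type classes, where the definition of $F$ and the i.i.d.\ results of \cite{vrana2021probabilistic} apply.

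For the upper bound, I would take $S_n := \secondordertypeclass{n}{\ball{\epsilon}{P}}$; by Markov typicality $\mu^P_{12\ldots n}(S_n) \to 1$, so $S_n$ is admissible for any fixed $c \in (0,1)$ once $n$ is large. Every string in $S_n$ has first-order type within $O(\epsilon)$ of $P_1$, so $S_n$ is contained in a union of at most $\mathrm{poly}(n)$ first-order type classes $\typeclass{n}{Q_1}$ with $Q_1 \in \ball{\epsilon'}{P_1}$. Using subadditivity of $f$ on vertex-set unions---the identity is a homomorphism $\complement{G[A \cup B]} \to \complement{G[A\setminus B] \disjointunion G[B]}$, so $f(G[A \cup B]) \le f(G[A \setminus B]) + f(G[B])$ by the spectral-point axioms---together with the defining property of $F$ and continuity of $F(G,\cdot)$, this yields $\frac{1}{n}\log f(G^{\strongproduct n}[S_n]) \le F(G, P_1) + O(\epsilon') + o(1)$, giving the upper bound after $\epsilon \to 0$.

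For the lower bound, given any $S$ with $\mu^P_{12\ldots n}(S) \ge c$, typicality combined with pigeonhole over the polynomially many triples $(Q, a, b)$ with $Q \in \ball{\epsilon}{P}$ produces one such that $S_Q := S \cap \secondordertypeclass{n}{Q,a,b}$ satisfies $\mu^P_{12\ldots n}(S_Q) \ge c/\mathrm{poly}(n)$. Equiprobability within the type class converts this into $\lvert S_Q\rvert / \lvert\typeclass{n}{\hat{Q}_1}\rvert \ge 2^{-n \mutualinformation(1:2)_Q + o(n)}$, where $\hat{Q}_1$ is the common first-order type of $\secondordertypeclass{n}{Q,a,b}$. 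The main step is a symmetry-based covering: the symmetric group acts transitively on $\typeclass{n}{\hat{Q}_1}$ by coordinate permutations, which are automorphisms of $G^{\strongproduct n}$. A union-bound argument shows that $k = \lceil C n \lvert\typeclass{n}{\hat{Q}_1}\rvert / \lvert S_Q\rvert \rceil$ uniformly random permutations cover $\typeclass{n}{\hat{Q}_1}$ by translates $\sigma_i S_Q$ with positive probability, and combining with subadditivity and automorphism-invariance of $f$,
\begin{equation*}
f(G^{\strongproduct n}[\typeclass{n}{\hat{Q}_1}]) \le k \cdot f(G^{\strongproduct n}[S_Q]).
\end{equation*}
Applying the defining property of $F$ to the left-hand side gives $\frac{1}{n}\log f(G^{\strongproduct n}[S_Q]) \ge F(G, Q_1) - \mutualinformation(1:2)_Q - o(1)$. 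Monotonicity yields $f(G^{\strongproduct n}[S]) \ge f(G^{\strongproduct n}[S_Q])$, and continuity of $F$ and $\mutualinformation$ in the distribution (together with compactness of $\overline{\ball{\epsilon}{P}}$ for passing to convergent subsequences of $\hat{Q}_1$) allow $Q$ to be replaced by $P$ up to $O(\epsilon)$; sending $\epsilon \to 0$ finishes the proof.

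The main obstacle is the covering argument: it must deliver the sharp loss $\mutualinformation(1:2)_P$ without overhead, which requires the density ratio $\lvert S_Q\rvert / \lvert\typeclass{n}{\hat{Q}_1}\rvert \approx 2^{-n \mutualinformation(1:2)_P}$ to be matched by the covering count $k \approx 2^{n \mutualinformation(1:2)_P}$, with the coupon-collector overhead $\log\lvert\typeclass{n}{\hat{Q}_1}\rvert = O(n)$ contributing only a $2^{o(n)}$ factor and the unavoidable lack of uniform convergence in the definition of $F$ absorbed by a compactness argument.
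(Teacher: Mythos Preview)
Your proposal is correct and follows essentially the same route as the paper: for \ref{it:MCupperbound} you use Markov typicality to reduce to a first-order type ball around $P_1$ (the paper does this directly rather than via the second-order ball, but the difference is cosmetic), and for \ref{it:MClowerbound} you pigeonhole onto a single second-order type class, convert mass to density via equiprobability, and then cover the ambient first-order type class by permuted copies of $S_Q$---which is exactly the content of \cite[Lemma 3.1]{vrana2021probabilistic} that the paper invokes. The compactness/continuity handling of the $n$-dependent type $\hat{Q}_1$ and the final limit $\epsilon\to 0$ match the paper's argument as well.
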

\begin{proof}
\ref{it:MCupperbound}:
By the law of large numbers, for every $\epsilon>0$, the set $S=\typeclass{n}{\ball{\epsilon}{P_1}}$ has probabilty $1-o(1)$ under $\mu_{12\ldots n}$, as $n\to\infty$. This implies
\begin{equation}
\begin{split}
\limsup_{n\to\infty}\min_{\substack{S\subseteq V(G)^n  \\  \mu^P_{12\ldots n}(S)\ge c}}\frac{1}{n}\log f(G^{\strongproduct n}[S])
 & \le \lim_{\epsilon\to 0}\limsup_{n\to\infty}\frac{1}{n}\log f(\typegraph{G}{n}{\ball{\epsilon}{P_1}})  \\
 & = F(G,P_1)
\end{split}
\end{equation}
by \cite[Proposition 3.4]{vrana2021probabilistic}.

\ref{it:MClowerbound}:
Let $\epsilon>0$, $n\in\naturals$, and let $S\subseteq V(G)^n$ such that $\mu^P_{12\ldots n}(S)\ge c$. Then
\begin{equation}\label{eq:intersectionprobabilitylowerbound}
\mu^P_{12\ldots n}(S\cap\secondordertypeclass{n}{\ball{\epsilon}{P}})\ge\mu^P_{12\ldots n}(S)-\left(1-\mu^P_{12\ldots n}(\secondordertypeclass{n}{\ball{\epsilon}{P}})\right)\ge c-o(1)
\end{equation}
by the law of large numbers. Choose $a,b\in\mathcal{X}$ and $Q\in\ball{\epsilon}{P}$ such that $\mu^P_{12\ldots n}(S\cap\secondordertypeclass{n}{Q,a,b})$ is maximal. Since $\mu^P_{12\ldots n}(S\cap\secondordertypeclass{n}{\ball{\epsilon}{P}})$ is the sum of at most $n^{\lvert V(G)\rvert^2}\lvert V(G)\rvert^2$ such terms, the maximal probability is at least $n^{-\lvert V(G)\rvert^2}\lvert V(G)\rvert^{-2}\mu^P_{12\ldots n}(S\cap\secondordertypeclass{n}{\ball{\epsilon}{P}})$. This implies that the part of this second-order type class covered by $S$ is not too small:
\begin{equation}\label{eq:intersectionnottoosmall}
\frac{\lvert S\cap\secondordertypeclass{n}{Q,a,b}\rvert}{\lvert \secondordertypeclass{n}{Q,a,b}\rvert}=\frac{\mu^P_{12\ldots n}(S\cap\secondordertypeclass{n}{Q,a,b})}{\mu^P_{12\ldots n}(\secondordertypeclass{n}{Q,a,b})}\ge\frac{\mu^P_{12\ldots n}(S\cap\secondordertypeclass{n}{\ball{\epsilon}{P}})}{n^{\lvert V(G)\rvert^2}\lvert V(G)\rvert^2},
\end{equation}
where we used that the restriction of $\mu^P_{12\ldots n}$ to $\secondordertypeclass{n}{Q,a,b}$ is uniform.

Note that the first-order type of a string is a function of its second-order type and the first and last letter. More precisely, let $\tilde{Q}=\frac{n-1}{n}Q_1+\frac{1}{n}\delta_b$ where $\delta_b$ is the Dirac measure concentrated at $b$. Then $\secondordertypeclass{n}{Q,a,b}\subseteq\typeclass{n}{\tilde{Q}}$.
Since $\typegraph{G}{n}{\tilde{Q}}$ is vertex-transitive, we can use \cite[Lemma 3.1]{vrana2021probabilistic} to conclude
\begin{equation}
\typegraph{G}{n}{\tilde{Q}}\le\complement{K_N}\strongproduct G^{\strongproduct n}[S\cap\secondordertypeclass{n}{Q,a,b}]\le \complement{K_N}\strongproduct G^{\strongproduct n}[S],
\end{equation}
where
\begin{equation}
\begin{split}
N
 & = \left\lfloor\frac{\lvert\typeclass{n}{\tilde{Q}}\rvert\ln\lvert\typeclass{n}{\tilde{Q}}\rvert}{\lvert S\cap\secondordertypeclass{n}{Q,a,b}\rvert}\right\rfloor+1  \\
 & \le \frac{\lvert\typeclass{n}{\tilde{Q}}\rvert\ln\lvert\typeclass{n}{\tilde{Q}}\rvert}{\lvert \secondordertypeclass{n}{Q,a,b}\rvert}\frac{n^{\lvert V(G)\rvert^2}\lvert V(G)\rvert^2}{\mu^P_{12\ldots n}(S\cap\secondordertypeclass{n}{\ball{\epsilon}{P}})}+1  \\
 & \le \frac{2^{n\entropy(\tilde{Q})}n\ln\lvert V(G)\rvert}{2^{n\entropy(2|1)_Q+o(n)}}\frac{n^{\lvert V(G)\rvert^2}\lvert V(G)\rvert^2}{c-o(1)}+1  \\
 & = 2^{n\mutualinformation(1:2)_Q+o(n)}.
\end{split}
\end{equation}
Here the first inequality uses \eqref{eq:intersectionnottoosmall}, and the second inequality uses \eqref{eq:intersectionprobabilitylowerbound}, \cref{lem:typeclassestimates} and that $\lvert\typeclass{n}{\tilde{Q}}\rvert\le 2^{n\entropy(\tilde{Q})}\le\lvert V(G)\rvert^n$, while the last equality uses that $\tilde{Q}\to Q_1=Q_2$ as $n\to\infty$. This implies
\begin{equation}
\frac{1}{n}\log f(\typegraph{G}{n}{\tilde{Q}})\le\mutualinformation(1:2)_Q+o(1)+\frac{1}{n}\log f(G^{\strongproduct n}[S]),
\end{equation}
therefore
\begin{equation}
\liminf_{n\to\infty}\min_{\substack{S\subseteq V(G)^n  \\  \mu^P_{12\ldots n}(S)\ge c}}\frac{1}{n}\log f(G^{\strongproduct n}[S])\ge \min_{Q\in\closedball{\epsilon}{P}}\left[F(G,Q_1)-\mutualinformation(1:2)_Q\right],
\end{equation}
using again that $\tilde{Q}\to Q_1$ as $n\to\infty$. The inequality holds for every $\epsilon>0$ and both $F(G,\cdot)$ and the mutual information are continuous, therefore the stated inequality follows by taking the limit as $\epsilon\to 0$.
\end{proof}

\begin{theorem}\label{thm:MarkovAEP}
Let $G$ be a graph and $P\in\secondorderdistributions(V(G))$. Then the pair $(G,\mu^P)$ has the asymptotic equipartition property. More precisely, for every $f\in\Delta(\graphs)$, with probabilistic refinement $F$, the equality
\begin{equation}
\lim_{n\to\infty}\min_{\substack{S\subseteq V(G)^n  \\  \mu^P_{12\ldots n}(S)\ge c}}\frac{1}{n}\log f(G^{\strongproduct n}[S])=\Frate(G,\mu^P)
\end{equation}
holds for all $c\in(0,1)$.
\end{theorem}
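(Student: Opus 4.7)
The approach is to iterate the single-step Markov estimates of \cref{prop:MCbounds} via the blocking reduction of \cref{prop:blockAEP}. Applied directly to $(G,\mu^P)$, \cref{prop:MCbounds} sandwiches the target quantity between $F(G,P_1)-\mutualinformation(1:2)_P$ and $F(G,P_1)$, leaving an unwanted gap of $\mutualinformation(1:2)_P$. The plan is to close this gap by treating blocks of $k$ consecutive symbols as super-symbols over the alphabet $V(G)^k$ (with confusability graph $G^{\strongproduct k}$), applying \cref{prop:MCbounds} to the blocked Markov chain, and letting $k\to\infty$.

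For each $k\in\positiveintegers$ let $P^{(k)}\in\distributions(V(G)^k\times V(G)^k)$ denote $\mu^P_{12\ldots 2k}$ viewed as the joint distribution of two consecutive length-$k$ blocks. The marginals $P^{(k)}_1=P^{(k)}_2=\mu^P_{12\ldots k}$ agree by stationarity. For $k$ coprime to the period $d$ of $\mu^P$ (of which there are infinitely many), a reachability argument modulo $d$ shows that the blocked Markov chain is irreducible on $\support P^{(k)}_1$, so that $P^{(k)}\in\secondorderdistributions(V(G)^k)$. For such $k$, the process $\mu^{P^{(k)}}$ on $(V(G)^k)^\infty$ coincides with $\mu^P$ after grouping letters into length-$k$ blocks.

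Apply \cref{prop:MCbounds} to $(G^{\strongproduct k},\mu^{P^{(k)}})$. Using $(G^{\strongproduct k})^{\strongproduct n}=G^{\strongproduct kn}$, $P^{(k)}_1=\mu^P_{12\ldots k}$, and the fact that the $n$-block marginal of $\mu^{P^{(k)}}$ is $\mu^P_{12\ldots kn}$, we obtain
\begin{equation*}
F(G^{\strongproduct k},\mu^P_{12\ldots k})-\mutualinformation(1:2)_{P^{(k)}}\le L_-^{(k)}\le L_+^{(k)}\le F(G^{\strongproduct k},\mu^P_{12\ldots k}),
\end{equation*}
where $L_-^{(k)}$ and $L_+^{(k)}$ denote the $\liminf$ and $\limsup$ as $n\to\infty$ of $\min_{S\subseteq V(G)^{kn},\ \mu^P_{12\ldots kn}(S)\ge c}\frac{1}{n}\log f(G^{\strongproduct kn}[S])$. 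Dividing this chain of inequalities by $k$ and invoking \cref{prop:blockAEP} identifies $L_\pm^{(k)}/k$ with the $\liminf$ and $\limsup$ of the unblocked target $\min_{S\subseteq V(G)^n,\ \mu^P_{12\ldots n}(S)\ge c}\frac{1}{n}\log f(G^{\strongproduct n}[S])$.

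Finally, let $k\to\infty$ along integers coprime to $d$. The upper-bound sequence $\frac{1}{k}F(G^{\strongproduct k},\mu^P_{12\ldots k})$ converges to $\Frate(G,\mu^P)$ by \cref{prop:Frateinfimum} (the full limit exists, so any subsequence agrees with it). For the lower bound, the Markov property combined with stationarity gives $\mutualinformation(1:2)_{P^{(k)}}=\mutualinformation(1:2)_P$ independently of $k$, since conditioning on the last symbol of the first block decouples the remainder of that block from the entire second block and stationarity reduces the mutual information to its single-step value. Dividing by $k$ therefore sends this correction to zero, so both sides of the sandwich converge to $\Frate(G,\mu^P)$ and the $\liminf$ and $\limsup$ of the unblocked quantity are forced to agree with $\Frate(G,\mu^P)$. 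I expect the main delicate point to be the preservation of irreducibility under blocking, handled here by the coprimality restriction; the rest is a direct assembly of \cref{prop:MCbounds,prop:blockAEP,prop:Frateinfimum}.
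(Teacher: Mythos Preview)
Your proposal is correct and follows essentially the same route as the paper's proof: block into super-symbols of length $k$ coprime to the period, apply \cref{prop:MCbounds} to the blocked chain, use \cref{prop:blockAEP} to identify the blocked limsup/liminf with the unblocked ones, and close the gap via $\mutualinformation(1:2)_{P^{(k)}}=\mutualinformation(1:2)_P$ together with \cref{prop:Frateinfimum}. Your write-up is in fact slightly more explicit than the paper's about why the mutual information stays constant and why passing to a subsequence of block lengths suffices.
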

\begin{proof}
Let $k\in\positiveintegers$ be coprime to the period of $\mu^P$ and consider blocks of $k$ outputs from the source. The resulting distribution is of the form $\mu^{P^{(k)}}$ for some $P^{(k)}\in\secondorderdistributions(V(G)^k)$, and satisfies $\mutualinformation(1:2)_{P^{(k)}}=\mutualinformation(1:2)_P$ by the Markov property, and $\Frate(G^{\strongproduct k},P^{(k)})=k\Frate(G,P)$. Using \cref{prop:blockAEP,prop:MCbounds} we get
\begin{equation}
\begin{split}
\limsup_{n\to\infty}\min_{\substack{S\subseteq V(G)^n  \\  \mu^P_{12\ldots n}(S)\ge c}}\frac{1}{n}\log f(G^{\strongproduct n}[S])
 & = \limsup_{n\to\infty}\min_{\substack{S\subseteq V(G)^{kn}  \\  \mu^{P^{(k)}}_{12\ldots n}(S)\ge c}}\frac{1}{kn}\log f(G^{\strongproduct kn}[S])  \\
 & \le \frac{1}{k}F(G^{\strongproduct k},P^{(k)}_1)  \\
 & = \frac{1}{k}F(G^{\strongproduct k},\mu^P_{12\ldots k})
\end{split}
\end{equation}
and
\begin{equation}
\begin{split}
\liminf_{n\to\infty}\min_{\substack{S\subseteq V(G)^n  \\  \mu^P_{12\ldots n}(S)\ge c}}\frac{1}{n}\log f(G^{\strongproduct n}[S])
 & = \liminf_{n\to\infty}\min_{\substack{S\subseteq V(G)^{kn}  \\  \mu^{P^{(k)}}_{12\ldots n}(S)\ge c}}\frac{1}{kn}\log f(G^{\strongproduct kn}[S])  \\
 & \ge \frac{1}{k}F(G^{\strongproduct k},P^{(k)}_1)-\frac{1}{k}\mutualinformation(1:2)_{P^{(k)}}  \\
 & = \frac{1}{k}F(G^{\strongproduct k},\mu^P_{12\ldots k})-\frac{1}{k}\mutualinformation(1:2)_{P}.
\end{split}
\end{equation}
As $k\to\infty$, both bounds converge to $\Frate(G,\mu^P)$.
\end{proof}
The result is easily extended to higher-order Markov chains by \cref{prop:blockAEP}, using that the induced process on sufficiently large blocks is a first-order Markov chain. Alternatively, one may arrive at the same conclusion from \cref{thm:hiddenprocess} below, by converting a higher-order Markov chain to a first-order hidden Markov model.

In the memoryless case the value $\Frate(G,P^{\otimes\infty})=F(G,P)$ can also be characterized as the limit $\lim_{\epsilon\to 0}\lim_{n\to\infty}\frac{1}{n}\log f(G^{\strongproduct n}[\typeclass{n}{\ball{\epsilon}{P}}])$. We do not know if the analogous expression for Markov chains is equal to the $F$-rate.
\begin{problem}
Is $\Frate(G,\mu^P)$ equal to the limit $\displaystyle\lim_{\epsilon\to 0}\lim_{n\to\infty}\frac{1}{n}\log f(G^{\strongproduct n}[\secondordertypeclass{n}{\ball{\epsilon}{P}}])$?
\end{problem}

We conclude this section with a generalization K\"orner's coding theorem \cite{korner1973coding} to irreducible Markov chains with an ambiguous alphabet, identifying the graph entropy rate as the number of bits per letter required for an encoding that maps distinguishable symbols to distinct codewords. For clarity and to ease comparison with \cite{korner1973coding}, where the i.i.d. case is considered, we use the complementary (relative to the rest of this paper) picture, where edges correspond to \emph{distinguishable} pairs of symbols. Recall that the chromatic number of a graph has the coding interpretation as the minimal output cardinality for a valid encoding.
\begin{proposition}
Let $G$ be a graph and $P\in\secondorderdistributions(V(G))$. Then for every $c\in(0,1)$ the equality
\begin{equation}
\lim_{n\to\infty}\min_{\substack{S\subseteq V(G)^n  \\  \mu^P_{12\ldots n}(S)\ge c}}\frac{1}{n}\log \chi(G^{\costrongproduct n}[S])=\lim_{n\to\infty}\frac{1}{n}\entropy(G,\mu^P_{12\ldots n}).
\end{equation}
holds.
\end{proposition}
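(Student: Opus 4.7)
The plan is to reduce the statement to \cref{thm:MarkovAEP} applied to the spectral point $\overline{\chi}_f$ (the fractional clique cover number), after settling two bookkeeping issues: the complementary convention used in the proposition, and the difference between the integer and fractional clique cover numbers.

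First I would translate into the paper's (confusability-edge) convention by setting $\tilde{G}=\complement{G}$. Using $\chi(H) = \overline{\chi}(\complement{H})$ together with $\complement{G^{\costrongproduct n}[S]} = \tilde{G}^{\strongproduct n}[S]$, one has $\chi(G^{\costrongproduct n}[S]) = \overline{\chi}(\tilde{G}^{\strongproduct n}[S])$. The classical Lov\'asz-type bound $\overline{\chi}(H) \le \overline{\chi}_f(H)(1+\ln \lvert V(H)\rvert)$ applied to $H = \tilde{G}^{\strongproduct n}[S]$ (whose vertex set has at most $\lvert V(G)\rvert^n$ elements) gives
\begin{equation*}
0 \le \frac{1}{n}\log \overline{\chi}(\tilde{G}^{\strongproduct n}[S]) - \frac{1}{n}\log \overline{\chi}_f(\tilde{G}^{\strongproduct n}[S]) \le \frac{1}{n}\log\!\bigl(1 + n\ln\lvert V(G)\rvert\bigr) = o(1),
\end{equation*}
uniformly in $S$. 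Consequently the two minima over $S\subseteq V(G)^n$ subject to $\mu^P_{12\ldots n}(S)\ge c$ agree in the limit $n\to\infty$.

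Next I would apply \cref{thm:MarkovAEP} to the pair $(\tilde{G}, \mu^P)$ with $f = \overline{\chi}_f\in\Delta(\graphs)$, which yields
\begin{equation*}
\lim_{n\to\infty}\min_{\substack{S\subseteq V(G)^n \\ \mu^P_{12\ldots n}(S)\ge c}}\frac{1}{n}\log \overline{\chi}_f(\tilde{G}^{\strongproduct n}[S]) = \Frate(\tilde{G},\mu^P).
\end{equation*}
To identify the right-hand side with the graph entropy rate, I would invoke the example recalled in \cref{sec:ambiguous}: the probabilistic refinement of $\overline{\chi}_f$ is the graph entropy of the complement, so for every distribution $\nu$ on $V(G)^n$ one has $F(\tilde{G}^{\strongproduct n}, \nu) = \entropy(\complement{\tilde{G}^{\strongproduct n}}, \nu) = \entropy(G^{\costrongproduct n}, \nu)$. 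Applying this to $\nu = \mu^P_{12\ldots n}$, dividing by $n$, and taking $n\to\infty$ (the limit exists by \cref{prop:Frateinfimum}) gives
\begin{equation*}
\Frate(\tilde{G},\mu^P) = \lim_{n\to\infty}\frac{1}{n}\entropy(G^{\costrongproduct n},\mu^P_{12\ldots n}),
\end{equation*}
which is the intended reading of the right-hand side of the proposition.

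The main work has already been done by \cref{thm:MarkovAEP}; the only genuinely new ingredient is the integer-versus-fractional comparison, and the classical Lov\'asz bound suffices because its logarithmic slack vanishes after the $\frac{1}{n}\log$ normalization.
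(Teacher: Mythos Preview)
Your proof is correct and follows essentially the same route as the paper's: apply \cref{thm:MarkovAEP} with $f=\overline{\chi}_f$ to get the limit for the fractional clique cover number equal to $\Frate(\complement{G},\mu^P)=\lim_n\frac{1}{n}\entropy(G^{\costrongproduct n},\mu^P_{12\ldots n})$, then bridge the gap between $\chi$ and $\chi_f$ via the Lov\'asz bound $\chi(H)\le\chi_f(H)(1+\ln\lvert V(H)\rvert)$, whose logarithmic slack is $o(1)$ after dividing by $n$. The only cosmetic difference is the order of the two steps (the paper first applies \cref{thm:MarkovAEP}, then the Lov\'asz comparison), and your explicit introduction of $\tilde{G}=\complement{G}$ to manage the complementary convention.
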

\begin{proof}
The graph entropy of the complement is the probabilistic refinement of the fractional clique cover number, therefore by \cref{thm:MarkovAEP} we have
\begin{equation}
\begin{split}
\lim_{n\to\infty}\min_{\substack{S\subseteq V(G)^n  \\  \mu^P_{12\ldots n}(S)\ge c}}\frac{1}{n}\log \chi_f(G^{\costrongproduct n}[S])
 & = \lim_{n\to\infty}\min_{\substack{S\subseteq V(G)^n  \\  \mu^P_{12\ldots n}(S)\ge c}}\frac{1}{n}\log \complement{\chi}_f(\complement{G}^{\strongproduct n}[S])  \\
 & =\lim_{n\to\infty}\frac{1}{n}\entropy(G,\mu^P_{12\ldots n}).
\end{split}
\end{equation}

The chromatic number and the fractional chromatic number are related as \cite{lovasz1975ratio}
\begin{equation}
\chi_f(H)\le\chi(H)\le\chi_f(H)(1+\ln\alpha(H))\le\chi_f(H)(1+\ln\lvert V(H)\rvert).
\end{equation}
This implies that
\begin{equation}
\frac{1}{n}\log \chi_f(G^{\costrongproduct n}[S])\le\frac{1}{n}\log \chi(G^{\costrongproduct n}[S])\le\frac{1}{n}\log \chi_f(G^{\costrongproduct n}[S])+\frac{\log(1+n\ln\lvert V(G)\rvert)}{n},
\end{equation}
therefore
\begin{equation}
\lim_{n\to\infty}\min_{\substack{S\subseteq V(G)^n  \\  \mu^P_{12\ldots n}(S)\ge c}}\frac{1}{n}\log \chi(G^{\costrongproduct n}[S])
 = \lim_{n\to\infty}\min_{\substack{S\subseteq V(G)^n  \\  \mu^P_{12\ldots n}(S)\ge c}}\frac{1}{n}\log \chi_f(G^{\costrongproduct n}[S]).
\end{equation}
\end{proof}

\section{Functions of processes}\label{sec:hidden}

In this section we elevate \cref{ex:HMM} to what appears to be the ``right'' generality within the present framework. We generalize in two ways: we will not put any restriction on the process $\mu$, and allow the observed output to be partially distinguishable. The alphabet for the ``hidden'' states will be $\mathcal{Z}$, $\mu$ is a probability measure on $\mathcal{Z}^\infty$, $G$ is the confusability graph of the output alphabet, and $\varphi:\mathcal{Z}\to V(G)$ is the function that determines the output from the hidden state. The main observation is that this model is equivalent (in the precise sense formulated below) to allowing the hidden process to be directly observed, but equipped with a confusability relation determined by $G$ and $\varphi$. The details are as follows.
\begin{definition}
Let $G$ be a graph and $\varphi:\mathcal{Z}\to V(G)$ a function from an arbitrary set $\mathcal{Z}$. We define the pullback $\varphi^*(G)$ as the graph with vertex set $\mathcal{Z}$ and such that there is an edge between distinct vertices $z_1,z_2\in \mathcal{Z}$ iff $\varphi(z_1)$ and $\varphi(z_2)$ are adjacent or $\varphi(z_1)=\varphi(z_2)$. Equivalently: $z_1\adjacentorequal z_2\iff\varphi(z_1)\adjacentorequal\varphi(z_2)$.
\end{definition}
Clearly, $\varphi^*(G)$ has the smallest edge set (with respect to containment) on $\mathcal{Z}$ that makes $\varphi$ a homomorphism between the complements. We can use this construction to relate the asymptotic equipartition property for the observed process with a given confusability graph to that of the hidden process.
\begin{lemma}\label{lem:pullbackF}
Let $G$ be a graph, $\varphi:\mathcal{Z}\to V(G)$ a function, and $P\in\distributions(\mathcal{Z})$. If $F$ is the probabilistic refinement of a spectral point, then
\begin{equation}
F(G,\varphi_*(P))=F(\varphi^*(G),P).
\end{equation}
\end{lemma}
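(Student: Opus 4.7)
The plan is to realize $(\varphi^*(G))^{\strongproduct n}$ restricted to a type class of $P^{(n)}$ as a strong product of $G^{\strongproduct n}[\typeclass{n}{\varphi_*(P^{(n)})}]$ with a complete graph, and then invoke the multiplicativity of $f$ together with $f(K_M)=1$ to equate the two $F$-values. Note that the inequality $F(\varphi^*(G),P)\le F(G,\varphi_*(P))$ is already immediate from \cref{it:Fmonotone} applied to the tautological homomorphism $\varphi:\complement{\varphi^*(G)}\to\complement{G}$; the content of the lemma is the matching lower bound, which will fall out of the same argument.

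Fix a sequence of $n$-types $P^{(n)}\in\distributions[n](\mathcal{Z})$ converging to $P$, and set $Q^{(n)}:=\varphi_*(P^{(n)})$, an $n$-type on $V(G)$ converging to $\varphi_*(P)$. The coordinate-wise map $\varphi^{\times n}:\mathcal{Z}^n\to V(G)^n$ sends $\typeclass{n}{P^{(n)}}$ onto $\typeclass{n}{Q^{(n)}}$, and every fiber has the same cardinality
\[
M_n=\prod_{v\in V(G)}\binom{nQ^{(n)}(v)}{(nP^{(n)}(z))_{z\in\varphi^{-1}(v)}},
\]
since this multinomial count depends only on the multiplicity profile of coordinates mapped to each $v$. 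By the definition of $\varphi^*(G)$, any two $z,z'\in\mathcal{Z}^n$ satisfy $z\adjacentorequal z'$ in $(\varphi^*(G))^{\strongproduct n}$ iff $\varphi^{\times n}(z)\adjacentorequal\varphi^{\times n}(z')$ in $G^{\strongproduct n}$. Hence each fiber becomes a clique of size $M_n$, and any two distinct fibers indexed by $y\neq y'\in\typeclass{n}{Q^{(n)}}$ are fully joined by edges iff $y\adjacent y'$ in $G^{\strongproduct n}$. Labeling each fiber arbitrarily by $[M_n]$ then gives an isomorphism
\[
(\varphi^*(G))^{\strongproduct n}[\typeclass{n}{P^{(n)}}]\cong G^{\strongproduct n}[\typeclass{n}{Q^{(n)}}]\strongproduct K_{M_n}.
\]

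Applying $f$ using multiplicativity and $f(K_{M_n})=f(K_1)=1$ (which holds since monotonicity applied to the obvious homomorphisms in both directions between $\complement{K_{M_n}}$ and $\complement{K_1}=K_1$ forces $f(K_{M_n})=f(K_1)$) gives
\[
f\!\left((\varphi^*(G))^{\strongproduct n}[\typeclass{n}{P^{(n)}}]\right)=f\!\left(G^{\strongproduct n}[\typeclass{n}{Q^{(n)}}]\right).
\]
Dividing by $n$, taking $\log$, and letting $n\to\infty$ in the definition of $F$ yields $F(\varphi^*(G),P)=F(G,\varphi_*(P))$. The main (and essentially only) obstacle is cleanly verifying the strong product decomposition above; once that is in place, multiplicativity of $f$ and the triviality $f(K_{M_n})=1$ close the argument.
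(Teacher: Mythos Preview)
Your argument is correct, and it is genuinely different from the paper's proof. You work at the ``low level'' of type classes: you identify the structural isomorphism
\[
(\varphi^*(G))^{\strongproduct n}\bigl[\typeclass{n}{P^{(n)}}\bigr]\;\cong\;G^{\strongproduct n}\bigl[\typeclass{n}{\varphi_*(P^{(n)})}\bigr]\strongproduct K_{M_n},
\]
and then use only multiplicativity of $f$ and $f(K_{M_n})=1$. The paper instead never touches type classes; it proves both inequalities directly from the axiomatic properties of the probabilistic refinement. One direction is the same monotonicity step you note. For the reverse, the paper picks a right inverse $\psi$ of $\varphi$ on $\varphi(\mathcal{Z})$, observes that $\psi$ is a homomorphism $\complement{G[\varphi(\mathcal{Z})]}\to\complement{\varphi^*(G)}$, applies monotonicity again, and then averages over all right inverses using concavity of $F(\varphi^*(G),\cdot)$ together with the fact that $P$ lies in the convex hull of the pushforwards $(\psi\circ\varphi)_*(P)$.

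Each approach buys something. Your combinatorial route yields a stronger finite-$n$ statement (exact equality of $f$ on the type-class subgraphs, not just equality of the limiting rates), and it makes the structural reason for the identity completely transparent; it is in the spirit of the ``low-level'' remark after \cref{prop:conditionalFnonnegative}. The paper's route is shorter, uses only the characterizing properties of $F$, and therefore ports verbatim to any functional satisfying those axioms without appealing to the specific type-class definition.
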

\begin{proof}
Since $\varphi:\complement{\varphi^*(G)}\to\complement{G}$ is a homomorphism, we have $F(\varphi^*(G),P)\le F(G,\varphi_*(P))$ by \cref{it:Fmonotone}.

Conversely, let $\psi$ be any right inverse of $\varphi$ on $\varphi(\mathcal{Z})$. Then $\psi:\complement{G[\varphi(\mathcal{Z})]}\to\complement{\varphi^*(G)}$ is a homomorphism, therefore
\begin{equation}
F(G,\varphi_*(P))=F(G[\varphi(\mathcal{Z})],\varphi_*(P))\le F(\varphi^*(G),(\varphi\circ\psi)_*(P)).
\end{equation}
The claim follows by concavity of $F$, using that $P$ is in the convex hull of the probability distributions of the form $(\varphi\circ\psi)_*(P)$.
\end{proof}
\begin{remark}
The reader may recognize the pullback graph as a special case of the $G$-join \cite[(6.1) Definition]{sabidussi1961graph} of a family of complete graphs, one for each fiber of $\varphi$. A generalization of \cref{lem:pullbackF} to arbitrary $G$-joins is proved in \cite[Proposition 4.4]{vrana2021probabilistic} under the additional condition that $f$ is multiplicative under the lexicographic product. While every known element of $\Delta(\graphs)$ has this property, it is an open question whether it is true for every spectral point \cite[Problem 6.9]{fritz2020unified}. Precisely this special case is used in \cite[Theorem 4.8]{vrana2021probabilistic} for proving that the complementary function $\log f^*(G)=\max_{P\in\distributions(V(G))}\left[\entropy(P)-F(\complement{G},P)\right]$ of such spectral points is $\le$-monotone, therefore \cref{lem:pullbackF} proves this monotonicity unconditionally, for the complementary function of every spectral point.
\end{remark}

\begin{corollary}\label{cor:pullbackinducedsubgraphf}
Let $T\subseteq\mathcal{Z}$. Then $f(\varphi^*(G)[T])=f(G[\varphi(T)])$.
\end{corollary}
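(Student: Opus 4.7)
The plan is to reduce the statement to \cref{lem:pullbackF} by observing that the induced subgraph on one side is itself a pullback, and then to recover the graph parameter from its probabilistic refinement via the variational formula.

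First I would note the key structural identity
\begin{equation*}
\varphi^*(G)[T] = (\varphi|_T)^*\bigl(G[\varphi(T)]\bigr),
\end{equation*}
where $\varphi|_T\colon T\to\varphi(T)$ is the surjective restriction. This is immediate from the defining adjacency $z_1\adjacentorequal z_2\iff\varphi(z_1)\adjacentorequal\varphi(z_2)$ in the pullback: for $z_1,z_2\in T$ this depends only on the images, which lie in $\varphi(T)$, hence coincide with adjacency in $G[\varphi(T)]$.

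Next I would invoke the variational characterization $\log f(H)=\max_{P\in\distributions(V(H))}F(H,P)$ mentioned in \cref{sec:ambiguous}, applied to both graphs. By \cref{lem:pullbackF} applied to the restricted map $\varphi|_T$,
\begin{equation*}
F\bigl(\varphi^*(G)[T],P\bigr)=F\bigl(G[\varphi(T)],(\varphi|_T)_*(P)\bigr)
\end{equation*}
for every $P\in\distributions(T)$. Since $\varphi|_T$ is surjective onto $\varphi(T)$, the pushforward $(\varphi|_T)_*$ maps $\distributions(T)$ onto all of $\distributions(\varphi(T))$. Taking the maximum over $P\in\distributions(T)$ on both sides therefore yields
\begin{equation*}
\log f\bigl(\varphi^*(G)[T]\bigr)=\max_{Q\in\distributions(\varphi(T))}F\bigl(G[\varphi(T)],Q\bigr)=\log f\bigl(G[\varphi(T)]\bigr),
\end{equation*}
and the corollary follows.

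There is no real obstacle here; the only thing to be a little careful about is ensuring that the restricted map $\varphi|_T$ really produces the induced subgraph on the pullback side (the structural identity above), and that surjectivity onto $\varphi(T)$ is used when transferring the maximum. The multiplicativity-under-lexicographic-product subtlety mentioned in the remark is avoided entirely, since we are working at the level of the probabilistic refinement $F$ via \cref{lem:pullbackF} rather than directly at the level of $f$ on a lexicographic product.
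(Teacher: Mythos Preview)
Your proof is correct and follows essentially the same route as the paper: both arguments combine \cref{lem:pullbackF} with the variational formula $\log f(H)=\max_P F(H,P)$ and the surjectivity of the pushforward onto distributions supported in $\varphi(T)$. The only cosmetic difference is that you first record the structural identity $\varphi^*(G)[T]=(\varphi|_T)^*(G[\varphi(T)])$ and apply \cref{lem:pullbackF} to the restricted map, whereas the paper applies \cref{lem:pullbackF} to the original $\varphi$ and handles the restriction via support conditions on $P$.
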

\begin{proof}
$\log f(\varphi^*(G)[T])$ is equal to the maximum of $F(\varphi^*(G),P)$ over those distributions $P$ which are supported within $T$. The set of the pushforwards of these distributions along $\varphi$ are precisely the distributions supported within $\varphi(T)$. By \cref{lem:pullbackF}, the maximum is the same as the maximum of $F(G,\varphi_*(P))$, which is equal to $\log f(G[\varphi(T)])$.
\end{proof}

\begin{lemma}\label{lem:pullbacktypicalf}
Let $G$ be a graph, $\varphi:\mathcal{Z}\to V(G)$ a function, $P\in\distributions(\mathcal{Z})$, and $c\in(0,1)$. Then
\begin{equation}
\min_{\substack{S\subseteq V(G)  \\  \varphi_*(P)(S)\ge c}}f(G[S])=\min_{\substack{T\subseteq \mathcal{Z}  \\  P(T)\ge c}}f(\varphi^*(G)[T]).
\end{equation}
\end{lemma}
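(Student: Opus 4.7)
The plan is to reduce this to \cref{cor:pullbackinducedsubgraphf} together with monotonicity of $f$ under the inclusion of induced subgraphs, via the two natural candidates: pushing a subset $T\subseteq\mathcal{Z}$ forward to $\varphi(T)\subseteq V(G)$, and pulling a subset $S\subseteq V(G)$ back to $\varphi^{-1}(S)\subseteq\mathcal{Z}$. In both directions, the probability constraint and the value of $f$ behave favorably, so the two minima coincide.

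For the inequality $\le$, I would take an arbitrary $T\subseteq\mathcal{Z}$ with $P(T)\ge c$ and set $S:=\varphi(T)$. Because $T\subseteq\varphi^{-1}(\varphi(T))=\varphi^{-1}(S)$, we have $\varphi_*(P)(S)=P(\varphi^{-1}(S))\ge P(T)\ge c$, so $S$ is admissible on the left. By \cref{cor:pullbackinducedsubgraphf}, $f(G[S])=f(G[\varphi(T)])=f(\varphi^*(G)[T])$, so the left-hand minimum is $\le f(\varphi^*(G)[T])$ for every admissible $T$, hence $\le$ the right-hand minimum.

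For the reverse inequality $\ge$, I would take an arbitrary $S\subseteq V(G)$ with $\varphi_*(P)(S)\ge c$ and set $T:=\varphi^{-1}(S)$, which is admissible on the right since $P(T)=\varphi_*(P)(S)\ge c$. Here a small subtlety appears: $\varphi(T)=S\cap\varphi(\mathcal{Z})$, which is only a subset of $S$. By \cref{cor:pullbackinducedsubgraphf}, $f(\varphi^*(G)[T])=f(G[S\cap\varphi(\mathcal{Z})])$, and the inclusion $S\cap\varphi(\mathcal{Z})\hookrightarrow S$ is a homomorphism $\complement{G[S\cap\varphi(\mathcal{Z})]}\to\complement{G[S]}$ (induced subgraphs preserve both edges and non-edges), so monotonicity of $f$ gives $f(\varphi^*(G)[T])\le f(G[S])$. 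Thus the right-hand minimum is $\le f(G[S])$ for every admissible $S$, hence $\le$ the left-hand minimum.

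The only place requiring care is this mismatch between $\varphi(\varphi^{-1}(S))$ and $S$, but it is harmless: removing vertices outside the image of $\varphi$ can only decrease $f$, which goes in the direction we need. No additional machinery beyond \cref{cor:pullbackinducedsubgraphf} and the defining $\le$-monotonicity of spectral points is required.
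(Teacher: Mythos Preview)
Your proof is correct and follows the same approach as the paper: push forward via $S=\varphi(T)$ for one inequality and pull back via $T=\varphi^{-1}(S)$ for the other, invoking \cref{cor:pullbackinducedsubgraphf} in both cases. In fact you are slightly more careful than the paper, which writes $f(G[\varphi(\varphi^{-1}(S))])=f(G[S])$ without comment; your observation that $\varphi(\varphi^{-1}(S))=S\cap\varphi(\mathcal{Z})$ and appeal to monotonicity of $f$ is the clean way to handle the case when $S$ is not contained in the image of $\varphi$.
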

\begin{proof}
Let $S\subseteq V(G)$ such that $\varphi_*(P)(S)\ge c$. Then $P(\varphi^{-1}(S))=\varphi_*(P)(S)\ge c$, therefore $T=\varphi^{-1}(S)$ is a feasible point for the minimum on the right hand side. This implies
\begin{equation}
\begin{split}
\smash{\min_{\substack{T\subseteq \mathcal{Z}  \\  P(T)\ge c}}}f(\varphi^*(G)[T])
 & \le f(\varphi^*(G)[\varphi^{-1}(S)])  \\
 & = f(G[\varphi(\varphi^{-1}(S))])  \\
 & = f(G[S]),
\end{split}
\end{equation}
where the first equality uses \cref{cor:pullbackinducedsubgraphf}. We minimize over $S$ to get the inequality $\ge$.

Conversely, let $T\subseteq\mathcal{Z}$ such that $P(T)\ge c$. Since $\varphi^{-1}(\varphi(T))\supseteq T$, we also have $\varphi_*(P)(\varphi(T))\ge c$, i.e. $S=\varphi(T)$ is a feasible point for the left hand side, therefore
\begin{equation}
\begin{split}
\smash{\min_{\substack{S\subseteq V(G)  \\  \varphi_*(P)(S)\ge c}}}f(G[S])
 & \le f(G[\varphi(T)])  \\
 & = f(\varphi^*(G)[T]),
\end{split}
\end{equation}
again using \cref{cor:pullbackinducedsubgraphf}. Taking the minimum over $T$ gives the inequality $\le$.
\end{proof}

\begin{lemma}\label{lem:pullbackstrongproduct}
Let $G_1$, $G_2$ be graphs and $\varphi_1:\mathcal{Z}_1\to V(G_1)$, $\varphi_2:\mathcal{Z}_2\to V(G_2)$ functions. Then $\varphi_1^*(G_1)\strongproduct\varphi_2^*(G_2)=(\varphi_1\times\varphi_2)^*(G_1\strongproduct G_2)$.
\end{lemma}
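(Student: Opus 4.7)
The plan is to unpack both sides of the claimed equality directly from the definitions and observe that the two edge relations on the common vertex set coincide.

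First I would note that both graphs $\varphi_1^*(G_1)\strongproduct\varphi_2^*(G_2)$ and $(\varphi_1\times\varphi_2)^*(G_1\strongproduct G_2)$ have vertex set $\mathcal{Z}_1\times\mathcal{Z}_2$, so it suffices to show that the relation $\adjacentorequal$ on pairs agrees on the two graphs. Fix vertices $(z_1,z_2),(z_1',z_2')\in\mathcal{Z}_1\times\mathcal{Z}_2$.

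For the left-hand side, by the definition of the strong product recalled in \cref{sec:ambiguous}, $(z_1,z_2)\adjacentorequal(z_1',z_2')$ in $\varphi_1^*(G_1)\strongproduct\varphi_2^*(G_2)$ iff $z_1\adjacentorequal z_1'$ in $\varphi_1^*(G_1)$ and $z_2\adjacentorequal z_2'$ in $\varphi_2^*(G_2)$. Applying the pullback characterization $z_i\adjacentorequal z_i'\iff\varphi_i(z_i)\adjacentorequal\varphi_i(z_i')$ to each factor, this condition is equivalent to $\varphi_1(z_1)\adjacentorequal\varphi_1(z_1')$ in $G_1$ and $\varphi_2(z_2)\adjacentorequal\varphi_2(z_2')$ in $G_2$. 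For the right-hand side, by definition of the pullback of $G_1\strongproduct G_2$ along $\varphi_1\times\varphi_2$, $(z_1,z_2)\adjacentorequal(z_1',z_2')$ iff $(\varphi_1(z_1),\varphi_2(z_2))\adjacentorequal(\varphi_1(z_1'),\varphi_2(z_2'))$ in $G_1\strongproduct G_2$, which by the strong product definition is again equivalent to the conjunction $\varphi_1(z_1)\adjacentorequal\varphi_1(z_1')$ and $\varphi_2(z_2)\adjacentorequal\varphi_2(z_2')$. Thus both characterizations reduce to the same condition, and the two graphs coincide.

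There is no real obstacle here: the statement is essentially the compatibility of two functorial constructions (pullback along a function and strong product), and since both are defined pointwise in terms of the relation $\adjacentorequal$, the equality follows at the level of edge sets by a two-line chase of definitions. The only thing to be slightly careful about is the distinction between adjacency and the relation $\adjacentorequal$ that lumps together adjacency and equality, but since both the strong product and the pullback are phrased in terms of $\adjacentorequal$ (so that loops, or rather "equal" pairs, are handled uniformly), the argument goes through without any case analysis on whether the coordinates are equal or genuinely adjacent.
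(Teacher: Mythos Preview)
Your proof is correct and follows essentially the same approach as the paper: both arguments verify that the $\adjacentorequal$ relations on the common vertex set $\mathcal{Z}_1\times\mathcal{Z}_2$ coincide by unfolding the definitions of the strong product and the pullback, reducing each side to the conjunction $\varphi_1(z_1)\adjacentorequal\varphi_1(z_1')$ in $G_1$ and $\varphi_2(z_2)\adjacentorequal\varphi_2(z_2')$ in $G_2$. Your explicit remark that working with $\adjacentorequal$ rather than raw adjacency avoids any case analysis is a nice clarifying touch, but otherwise the argument is the same chain of equivalences.
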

\begin{proof}
Let $z_1,w_1\in \mathcal{Z}_1$ and $z_2,w_2\in\mathcal{Z}_2$. Then
\begin{multline}
(z_1,z_2)\adjacentorequal(w_1,w_2)\text{ in }\varphi_1^*(G_1)\strongproduct\varphi_2^*(G_2)  \\
\iff \big(z_1\adjacentorequal w_1\text{ in }\varphi_1^*(G_1)\big)\text{ and }\big(z_2\adjacentorequal w_2\text{ in }\varphi_2^*(G_2)\big)  \\
\iff \big(\varphi_1(z_1)\adjacentorequal\varphi_1(w_1)\text{ in }G_1\big)\text{ and }\big(\varphi_2(z_2)\adjacentorequal\varphi_2(w_2)\text{ in }G_2\big)  \\
\iff (\varphi_1(z_1),\varphi_2(z_2))\adjacentorequal(\varphi_1(w_1),\varphi_2(w_2))\text{ in }G_1\strongproduct G_2  \\
\iff (z_1,z_2)\adjacentorequal(w_1,w_2)\text{ in }(\varphi_1\times\varphi_2)^*(G_1\strongproduct G_2).
\end{multline}
\end{proof}

\begin{theorem}\label{thm:hiddenprocess}
Let $\mathcal{Z}$ be a (finite) alphabet, $G$ a graph, $\varphi:\mathcal{Z}\to V(G)$ a function, and $\mu$ a probability measure on $\mathcal{Z}^\infty$. Then
\begin{enumerate}
\item\label{it:hiddenAEP} $(G,\varphi^\infty_*(\mu))$ has the asymptotic equipartition property iff $(\varphi^*(G),\mu)$ does;
\item\label{it:hiddenFrate} if either $\Frate(G,\varphi^\infty_*(\mu))$ or $\Frate(\varphi^*(G),\mu)$ exist, then the other one exists as well and the two are equal.
\end{enumerate}
\end{theorem}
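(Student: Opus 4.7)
The plan is to reduce both parts to the $n$-fold versions of the preceding lemmas, exploiting the fact that everything commutes with taking strong powers. Let $\varphi^n:\mathcal{Z}^n\to V(G)^n$ denote the product map $(z_1,\ldots,z_n)\mapsto(\varphi(z_1),\ldots,\varphi(z_n))$. Iterating \cref{lem:pullbackstrongproduct} gives the identity $(\varphi^n)^*(G^{\strongproduct n})=(\varphi^*(G))^{\strongproduct n}$, while the definition of pushforward and of marginal distributions gives $\varphi^n_*(\mu_{12\ldots n})=(\varphi^\infty_*(\mu))_{12\ldots n}$. With these two bookkeeping identities in hand, both statements reduce to direct applications of \cref{lem:pullbacktypicalf} respectively \cref{lem:pullbackF}.

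For part \ref{it:hiddenAEP}, I would apply \cref{lem:pullbacktypicalf} to the graph $G^{\strongproduct n}$, the function $\varphi^n$, and the distribution $P=\mu_{12\ldots n}$. Using the two identities above, this yields
\begin{equation}
\min_{\substack{S\subseteq V(G)^n\\(\varphi^\infty_*(\mu))_{12\ldots n}(S)\ge c}}f(G^{\strongproduct n}[S])
=\min_{\substack{T\subseteq\mathcal{Z}^n\\\mu_{12\ldots n}(T)\ge c}}f((\varphi^*(G))^{\strongproduct n}[T])
\end{equation}
for every $f\in\Delta(\graphs)$, every $n\in\positiveintegers$, and every $c\in(0,1)$. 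Dividing by $n$ and passing to the limit, the sequence on the left has a limit independent of $c$ if and only if the one on the right does, with the same value, which is precisely the content of \cref{def:generalAEP} for the two pairs.

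For part \ref{it:hiddenFrate}, I would similarly apply \cref{lem:pullbackF} to $G^{\strongproduct n}$, $\varphi^n$, and $P=\mu_{12\ldots n}$, again using the two identities to obtain
\begin{equation}
F(G^{\strongproduct n},(\varphi^\infty_*(\mu))_{12\ldots n})=F((\varphi^*(G))^{\strongproduct n},\mu_{12\ldots n})
\end{equation}
for every $n$. Dividing by $n$ shows that the two sequences defining $\Frate(G,\varphi^\infty_*(\mu))$ and $\Frate(\varphi^*(G),\mu)$ are term-by-term equal, so one limit exists if and only if the other does, and the values coincide. There is no substantial obstacle here: the argument is a compact bookkeeping chain that strings together the pullback lemmas already established, and the only thing requiring explicit (but immediate) verification is the compatibility of pullback with strong powers and of pushforward with marginals.
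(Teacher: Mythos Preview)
Your proposal is correct and essentially identical to the paper's own proof: both argue by applying \cref{lem:pullbacktypicalf} and \cref{lem:pullbackF} to the $n$-fold data $(G^{\strongproduct n},\varphi^n,\mu_{12\ldots n})$, invoking \cref{lem:pullbackstrongproduct} for the identity $(\varphi^n)^*(G^{\strongproduct n})=(\varphi^*(G))^{\strongproduct n}$ and the observation $(\varphi^\infty_*(\mu))_{12\ldots n}=\varphi^n_*(\mu_{12\ldots n})$, then dividing by $n$ and taking limits. If anything, you spell out the two bookkeeping identities more explicitly than the paper does.
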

\begin{proof}
\ref{it:hiddenAEP}:
For every $c\in(0,1)$ and $n\in\naturals$ we have
\begin{equation}
\min_{\substack{S\subseteq V(G)^n  \\  \varphi^\infty_*(\mu)_{12\ldots n}(S)\ge c}}\frac{1}{n}\log f(G^{\strongproduct n}[S])=\min_{\substack{T\subseteq \mathcal{Z}^n  \\  \mu_{12\ldots n}(T)\ge c}}\frac{1}{n}\log f(\varphi^*(G)^{\strongproduct n}[T])
\end{equation}
by \cref{lem:pullbacktypicalf,lem:pullbackstrongproduct}. The claim follows by taking the limit $n\to\infty$.

\ref{it:hiddenFrate}:
By \cref{lem:pullbackF,lem:pullbackstrongproduct} we have
\begin{equation}
\begin{split}
F(G^{\strongproduct n},\varphi^\infty_*(\mu)_{12\ldots n})
 & = F(G^{\strongproduct n},\varphi^n_*(\mu_{12\ldots n}))  \\
 & = F(\varphi^n_*(G^{\strongproduct n}),\mu_{12\ldots n})  \\
 & = F(\varphi_*(G)^{\strongproduct n},\mu_{12\ldots n})
\end{split}
\end{equation}
for all $n\in\naturals$. After dividing by $n$ and taking the limit (if it exist), we obtain $\Frate(G,\varphi^\infty_*(\mu))=\Frate(\varphi^*(G),\mu)$.
\end{proof}

In particular, if $\mu$ is an irreducible stationary Markov chain, then $(G,\mu)$ has the asymptotic equipartition property, and the limit in \cref{def:generalAEP} is the $F$-rate by \cref{thm:MarkovAEP}. From \cref{thm:hiddenprocess} we can conclude that the same is true when $\mu$ is a hidden Markov model such that the hidden process is irreducible stationary.

\section*{Acknowledgement}

This work was partially funded by the National Research, Development and Innovation Office of Hungary via the research grants K~124152, by the \'UNKP-21-5 New National Excellence Program of the Ministry for Innovation and Technology, the J\'anos Bolyai Research Scholarship of the Hungarian Academy of Sciences, and by the Ministry of Innovation and Technology and the National Research, Development and Innovation Office within the Quantum Information National Laboratory of Hungary.

\bibliography{refs}{}

\newcommand{\etalchar}[1]{$^{#1}$}
\begin{thebibliography}{CMR{\etalchar{+}}14}

\bibitem[AF04]{alicki2004continuity}
Robert Alicki and Mark Fannes.
\newblock Continuity of quantum conditional information.
\newblock {\em Journal of Physics A: Mathematical and General}, 37(5):L55,
  2004.
\newblock \href {http://arxiv.org/abs/quant-ph/0312081}
  {\path{arXiv:quant-ph/0312081}}, \href
  {http://dx.doi.org/10.1088/0305-4470/37/5/L01}
  {\path{doi:10.1088/0305-4470/37/5/L01}}.

\bibitem[BC18]{bukh2018fractional}
Boris Bukh and Christopher Cox.
\newblock On a fractional version of {H}aemers' bound.
\newblock {\em IEEE Transactions on Information Theory}, 65(6):3340--3348,
  2018.
\newblock \href {http://arxiv.org/abs/1802.00476} {\path{arXiv:1802.00476}},
  \href {http://dx.doi.org/10.1109/TIT.2018.2889108}
  {\path{doi:10.1109/TIT.2018.2889108}}.

\bibitem[Bil61]{billingsley1961statistical}
Patrick Billingsley.
\newblock Statistical methods in {M}arkov chains.
\newblock {\em The annals of mathematical statistics}, 32(1):12--40, 1961.
\newblock \href {http://dx.doi.org/10.1214/aoms/1177705136}
  {\path{doi:10.1214/aoms/1177705136}}.

\bibitem[Bla57]{blackwell1959entropy}
David~Harold Blackwell.
\newblock The entropy of functions of finite-state {M}arkov chains.
\newblock In {\em Transactions of the first {P}rague conference on information
  theory, {S}tatistical decision functions, random processes held at {L}iblice
  near {P}rague from {N}ovember 28 to 30, 1956}, pages 13--20. Publishing House
  of the Czechoslovak Academy of Sciences, Prague, 1957.

\bibitem[Bla13]{blasiak2013graph}
Anna Blasiak.
\newblock {\em A graph-theoretic approach to network coding}.
\newblock PhD thesis, Cornell University, 2013.
\newblock URL: \url{https://hdl.handle.net/1813/34147}.

\bibitem[Bor20]{boreland2020information}
Gareth Boreland.
\newblock {\em Information theoretic parameters for graphs and operator
  systems}.
\newblock PhD thesis, Queen's University Belfast, July 2020.
\newblock URL:
  \url{https://ethos.bl.uk/OrderDetails.do?uin=uk.bl.ethos.804510}.

\bibitem[Boz71]{boza1971asymptotically}
Luis~B Boza.
\newblock Asymptotically optimal tests for finite {M}arkov chains.
\newblock {\em The Annals of Mathematical Statistics}, 42(6):1992--2007, 1971.
\newblock \href {http://dx.doi.org/10.1214/aoms/1177693067}
  {\path{doi:10.1214/aoms/1177693067}}.

\bibitem[Bre57]{breiman1957individual}
Leo Breiman.
\newblock The individual ergodic theorem of information theory.
\newblock {\em The Annals of Mathematical Statistics}, 28(3):809--811, 1957.
\newblock \href {http://dx.doi.org/10.1214/aoms/1177706899}
  {\path{doi:10.1214/aoms/1177706899}}.

\bibitem[CKL{\etalchar{+}}90]{csiszar1990entropy}
Imre Csisz{\'a}r, J{\'a}nos K{\"o}rner, L{\'a}szl{\'o} Lov{\'a}sz, Katalin
  Marton, and G{\'a}bor Simonyi.
\newblock Entropy splitting for antiblocking corners and perfect graphs.
\newblock {\em Combinatorica}, 10(1):27--40, 1990.
\newblock \href {http://dx.doi.org/10.1007/BF02122693}
  {\path{doi:10.1007/BF02122693}}.

\bibitem[CMR{\etalchar{+}}14]{cubitt2014bounds}
Toby Cubitt, Laura Man{\v{c}}inska, David~E Roberson, Simone Severini, Dan
  Stahlke, and Andreas Winter.
\newblock Bounds on entanglement-assisted source-channel coding via the
  {L}ov{\'a}sz {$\vartheta$} number and its variants.
\newblock {\em IEEE Transactions on Information Theory}, 60(11):7330--7344,
  2014.
\newblock \href {http://arxiv.org/abs/1310.7120} {\path{arXiv:1310.7120}},
  \href {http://dx.doi.org/10.1109/TIT.2014.2349502}
  {\path{doi:10.1109/TIT.2014.2349502}}.

\bibitem[Csi98]{csiszar1998method}
Imre Csisz{\'a}r.
\newblock The method of types.
\newblock {\em IEEE Transactions on Information Theory}, 44(6):2505--2523,
  1998.
\newblock \href {http://dx.doi.org/10.1109/18.720546}
  {\path{doi:10.1109/18.720546}}.

\bibitem[CT12]{cover2012elements}
Thomas~M Cover and Joy~A Thomas.
\newblock {\em Elements of information theory}.
\newblock John Wiley \& Sons, 2012.
\newblock \href {http://dx.doi.org/10.1002/047174882X}
  {\path{doi:10.1002/047174882X}}.

\bibitem[Fri20]{fritz2020unified}
Tobias Fritz.
\newblock A unified construction of semiring-homomorphic graph invariants.
\newblock {\em Journal of Algebraic Combinatorics}, 2020.
\newblock \href {http://arxiv.org/abs/1901.01090} {\path{arXiv:1901.01090}},
  \href {http://dx.doi.org/10.1007/s10801-020-00983-y}
  {\path{doi:10.1007/s10801-020-00983-y}}.

\bibitem[Gra11]{gray2011entropy}
Robert~M Gray.
\newblock {\em Entropy and information theory}.
\newblock Springer Science \& Business Media, 2011.
\newblock \href {http://dx.doi.org/10.1007/978-1-4419-7970-4}
  {\path{doi:10.1007/978-1-4419-7970-4}}.

\bibitem[GS77]{gray1977maximum}
Robert~M Gray and Paul~C Shields.
\newblock The maximum mutual information between two random processes.
\newblock {\em Information and Control}, 33(4):273--280, 1977.
\newblock \href {http://dx.doi.org/10.1016/S0019-9958(77)90434-X}
  {\path{doi:10.1016/S0019-9958(77)90434-X}}.

\bibitem[Hae79]{haemers1979some}
Willem Haemers.
\newblock On some problems of {L}ov{\'a}sz concerning the {S}hannon capacity of
  a graph.
\newblock {\em IEEE Transactions on Information Theory}, 25(2):231--232, 1979.
\newblock \href {http://dx.doi.org/10.1109/TIT.1979.1056027}
  {\path{doi:10.1109/TIT.1979.1056027}}.

\bibitem[K{\"o}r73]{korner1973coding}
J{\'a}nos K{\"o}rner.
\newblock Coding of an information source having ambiguous alphabet and the
  entropy of graphs.
\newblock In {\em 6th Prague conference on information theory}, pages 411--425,
  1973.

\bibitem[Lov75]{lovasz1975ratio}
L{\'a}szl{\'o} Lov{\'a}sz.
\newblock On the ratio of optimal integral and fractional covers.
\newblock {\em Discrete mathematics}, 13(4):383--390, 1975.
\newblock \href {http://dx.doi.org/10.1016/0012-365X(75)90058-8}
  {\path{doi:10.1016/0012-365X(75)90058-8}}.

\bibitem[Lov79]{lovasz1979shannon}
L{\'a}szl{\'o} Lov{\'a}sz.
\newblock On the {S}hannon capacity of a graph.
\newblock {\em IEEE Transactions on Information Theory}, 25(1):1--7, 1979.
\newblock \href {http://dx.doi.org/10.1109/TIT.1979.1055985}
  {\path{doi:10.1109/TIT.1979.1055985}}.

\bibitem[Mar93]{marton1993shannon}
Katalin Marton.
\newblock On the {S}hannon capacity of probabilistic graphs.
\newblock {\em Journal of Combinatorial Theory, Series B}, 57(2):183--195,
  1993.
\newblock \href {http://dx.doi.org/10.1006/jctb.1993.1015}
  {\path{doi:10.1006/jctb.1993.1015}}.

\bibitem[McM53]{mcmillan1953basic}
Brockway McMillan.
\newblock The basic theorems of information theory.
\newblock {\em The Annals of Mathematical Statistics}, 24(2):196--219, 1953.
\newblock \href {http://dx.doi.org/10.1214/aoms/1177729028}
  {\path{doi:10.1214/aoms/1177729028}}.

\bibitem[MO92]{muraki1992note}
Naohumi Muraki and Masanori Ohya.
\newblock Note on continuity of information rate.
\newblock {\em Illinois Journal of Mathematics}, 36(4):529--550, 1992.
\newblock \href {http://dx.doi.org/10.1215/ijm/1255987323}
  {\path{doi:10.1215/ijm/1255987323}}.

\bibitem[Orn73]{ornstein1973application}
Donald~S Ornstein.
\newblock An application of ergodic theory to probability theory.
\newblock {\em The Annals of Probability}, 1(1):43--58, 1973.
\newblock \href {http://dx.doi.org/10.1214/aop/1176997024}
  {\path{doi:10.1214/aop/1176997024}}.

\bibitem[Orn74]{ornstein1974ergodic}
Donald~S Ornstein.
\newblock {\em Ergodic Theory, Randomness, and Dynamical Systems}, volume~5 of
  {\em Yale Mathematical Monographs}.
\newblock Yale University Press, 1974.

\bibitem[PW16]{polyanskiy2016wasserstein}
Yury Polyanskiy and Yihong Wu.
\newblock Wasserstein continuity of entropy and outer bounds for interference
  channels.
\newblock {\em IEEE Transactions on Information Theory}, 62(7):3992--4002,
  2016.
\newblock \href {http://arxiv.org/abs/1504.04419} {\path{arXiv:1504.04419}},
  \href {http://dx.doi.org/10.1109/TIT.2016.2562630}
  {\path{doi:10.1109/TIT.2016.2562630}}.

\bibitem[Sab61]{sabidussi1961graph}
Gert Sabidussi.
\newblock Graph derivatives.
\newblock {\em Mathematische Zeitschrift}, 76(1):385--401, 1961.
\newblock \href {http://dx.doi.org/10.1007/BF01210984}
  {\path{doi:10.1007/BF01210984}}.

\bibitem[Sha48]{shannon1948mathematical}
Claude~Elwood Shannon.
\newblock A mathematical theory of communication.
\newblock {\em The Bell System Technical Journal}, 27(3):379--423, 1948.
\newblock \href {http://dx.doi.org/10.1002/j.1538-7305.1948.tb01338.x}
  {\path{doi:10.1002/j.1538-7305.1948.tb01338.x}}.

\bibitem[Sha56]{shannon1956zero}
Claude Shannon.
\newblock The zero error capacity of a noisy channel.
\newblock {\em IRE Transactions on Information Theory}, 2(3):8--19, 1956.
\newblock \href {http://dx.doi.org/10.1109/TIT.1956.1056798}
  {\path{doi:10.1109/TIT.1956.1056798}}.

\bibitem[Sim95]{simonyi1995graph}
G{\'a}bor Simonyi.
\newblock Graph entropy: A survey.
\newblock In {\em {DIMACS} Series in Discrete Mathematics and Theoretical
  Computer Science}, volume~20, pages 399--441. American Mathematical Society,
  June 1995.
\newblock \href {http://dx.doi.org/10.1090/dimacs/020/08}
  {\path{doi:10.1090/dimacs/020/08}}.

\bibitem[Sim01]{simonyi2001perfect}
G{\'a}bor Simonyi.
\newblock Perfect graphs and graph entropy. {A}n updated survey.
\newblock In Jorge~L Ram{\'i}rez-Alfons{\'i}n and Bruce~A Reed, editors, {\em
  Perfect graphs}, chapter~13, pages 293--328. John Wiley \& Sons, 2001.

\bibitem[Vra21]{vrana2021probabilistic}
P{\'e}ter Vrana.
\newblock Probabilistic refinement of the asymptotic spectrum of graphs.
\newblock {\em Combinatorica}, 41:873--904, 2021.
\newblock \href {http://arxiv.org/abs/1903.01857} {\path{arXiv:1903.01857}},
  \href {http://dx.doi.org/10.1007/s00493-020-4324-5}
  {\path{doi:10.1007/s00493-020-4324-5}}.

\bibitem[Whi55]{whittle1955some}
Peter Whittle.
\newblock Some distribution and moment formulae for the {M}arkov chain.
\newblock {\em Journal of the Royal Statistical Society: Series B
  (Methodological)}, 17(2):235--242, 1955.
\newblock \href {http://dx.doi.org/10.1111/j.2517-6161.1955.tb00197.x}
  {\path{doi:10.1111/j.2517-6161.1955.tb00197.x}}.

\bibitem[Win16]{winter2016tight}
Andreas Winter.
\newblock Tight uniform continuity bounds for quantum entropies: conditional
  entropy, relative entropy distance and energy constraints.
\newblock {\em Communications in Mathematical Physics}, 347(1):291--313, 2016.
\newblock \href {http://arxiv.org/abs/1507.07775} {\path{arXiv:1507.07775}},
  \href {http://dx.doi.org/10.1007/s00220-016-2609-8}
  {\path{doi:10.1007/s00220-016-2609-8}}.

\bibitem[Zui19]{zuiddam2019asymptotic}
Jeroen Zuiddam.
\newblock The asymptotic spectrum of graphs and the {S}hannon capacity.
\newblock {\em Combinatorica}, 39(5):1173--1184, 2019.
\newblock \href {http://arxiv.org/abs/1807.00169} {\path{arXiv:1807.00169}},
  \href {http://dx.doi.org/10.1007/s00493-019-3992-5}
  {\path{doi:10.1007/s00493-019-3992-5}}.

\end{thebibliography}

\end{document}